\def\BibTeX{{\rm B\kern-.05em{\sc i\kern-.025em b}\kern-.08em
    T\kern-.1667em\lower.7ex\hbox{E}\kern-.125emX}}
\newtheorem{lemma}{Lemma}
\newtheorem{theorem}{Theorem}
\newtheorem{remark}{Remark}
\newtheorem{proposition}{Proposition}
\DeclareTextCommand{\DJ}{OT1}{%
  \raisebox{-0.1ex}{\scalebox{0.75}[1.4]{--}}\kern-.4em D%
}
\newcommand{\algorithmicinput}{\textbf{Input:}}
\def\Input{\item[\algorithmicinput]}
\newcommand{\vx}{\boldsymbol{x}}
\newcommand{\vu}{\boldsymbol{u}}
\newcommand{\vy}{\boldsymbol{y}}
\newcommand{\xe}{\widehat{\vx}}
\newcommand{\bea}{\begin{equation} \begin{aligned}}
\newcommand{\eea}{ \end{aligned}\end{equation}}
\newcommand{\beas}{\begin{equation*} \begin{aligned}}
\newcommand{\eeas}{ \end{aligned}\end{equation*}}
\newcommand{\beq}{\begin{equation}}
\newcommand{\eeq}{\end{equation}}
\newcommand{\beqs}{\begin{equation*}}
\newcommand{\eeqs}{\end{equation*}}
\newcommand{\bpm}{\begin{pmatrix}}
\newcommand{\epm}{\end{pmatrix}}
\title{\LARGE \bf
Two-Channel Extended Kalman Filtering\\ with Intermittent Measurements*
}
\author{Vicu-Mihalis Maer$^{1}$, Zs\'ofia Lendek$^{1}$, \c{S}tefan P\^{i}rje$^{1}$, Domagoj Tolić$^{2}$,\\ Antun \DJ ura\v{s}$^{3}$, Vicko Prka\v{c}in$^{3}$, Ivana Palunko$^{3}$, and Lucian Bușoniu$^{1}$
\thanks{*This work has been financially supported from H2020 SeaClear, a project that received funding from the European Union's Horizon 2020 research and innovation programme under grant agreement No 871295; and by the Romanian National Authority for Scientific Research, CNCS-UEFISCDI, SeaClear support project number PN-III-P3-3.6-H2020-2020-0060.}
\thanks{$^{1}$Vicu-Mihalis Maer, Zs\'ofia Lendek, \c{S}tefan P\^{i}rje, and Lucian Bușoniu are with the Technical University of Cluj-Napoca, Cluj-Napoca, Romania
        {\tt\small \{vicu.maer, zsofia.lendek, stefan.pirje\}@aut.utcluj.ro, lucian@busoniu.net}}%
\thanks{$^{2}$Domagoj Tolić is with RIT Croatia, Dubrovnik, Croatia
        {\tt\small domagoj.tolic@croatia.rit.edu}}%
\thanks{$^{2}$Antun \DJ ura\v{s}, Vicko Prka\v{c}in, and Ivana Palunko are with the University of Dubrovnik, Dubrovnik, Croatia
        {\tt\small \{antun.djuras, vicko.prkacin, ivana.palunko\}@unidu.hr}}%
}
\begin{document}

\maketitle
\thispagestyle{empty}
\pagestyle{empty}




\begin{abstract}
We consider two nonlinear state estimation problems in a setting where an extended Kalman filter receives measurements from two sets of sensors via two channels (2C). In the \emph{stochastic-2C} problem, the channels drop measurements stochastically, whereas in \emph{2C scheduling}, the estimator chooses when to read each channel. In the first problem, we generalize linear-case 2C analysis to obtain -- for a given pair of channel arrival rates -- boundedness conditions for the trace of the error covariance, as well as a worst-case upper bound. For scheduling, an optimization problem is solved to find arrival rates that balance low channel usage with low trace bounds, and channels are read deterministically with the expected periods corresponding to these arrival rates. We validate both solutions in simulations for linear and nonlinear dynamics; as well as in a real experiment with an underwater robot whose position is being intermittently found in a UAV camera image.
\end{abstract}


\section{Introduction}

The notion of intermittent information, whether an intrinsic or human-imposed control system property, has been extensively investigated for over two decades \cite{kjastrom2002cdc,mdlemmon2010,dtolic2017book}. These efforts naturally fall within the scope of networked control systems \cite{IEEEhowto:jhespanha,abemporad2010}. 
For example, lossy communication channels with limited bandwidths, scheduling protocols, packet collisions, sensor occlusions and limited communication/sensing ranges give rise to intermittent information. On the other hand, feedback is expected to supply estimators and controllers with up-to-date data regarding the process of interest. To resolve this tension, it is important to establish conditions leading to a satisfactory estimation and/or control performance in the presence of intermittent information. Herein, we focus on estimation under intermittent measurements.

In particular, we consider a scenario in which a nonlinear system is observed by two sets of sensors via two respective channels. Either set of sensors may in general be local, but we use the name ``channel'' even in that case, since it is standard \cite{Liu04}. Our main objective is to propose solutions for two related problems in this nonlinear two-channel (2C) setting: \emph{stochastic-2C estimation}, where the two channels drop measurements stochastically with different probabilities (an intrinsic property), and \emph{2C scheduling}, in which the estimation algorithm may choose whether to use either of these channels at each discrete time step (a human-imposed property).

The scenario above is motivated by a practical problem that occurs in the European Horizon 2020 SeaClear project \cite{seaclear}. An Unmanned Underwater Vehicle (UUV) has access to its internal sensors at every step, but these sensors cannot compensate for drift in the position estimate. On the other hand, underwater absolute position sensors are expensive and sometimes unreliable, so instead, the UUV position is determined using the camera image of an Unmanned Aerial Vehicle (UAV), which is however possible only when the UUV is close enough to the surface to be visible. Thus, this second sensor is available intermittently, and the UUV chooses when to resurface to make it available. Beyond this specific case, the two-channel (2C) scenario appears e.g. when shared communication networks with limited throughput are encountered \cite{abemporad2010,dtolic2017book}.



In linear Kalman filtering with intermittent measurements on a single channel, conditions on the boundedness of estimation error covariance were developed in \cite{Sinopoli04}. That reference showed that there exists a critical value for the arrival rate of the single-channel measurements, beyond which the covariance becomes unbounded. This critical probability has been further analyzed in \cite{Plarre09}. Reference \cite{Liu04} extended the results in \cite{Sinopoli04} to stochastic-2C Kalman filtering, with probabilities $\lambda_1$ and $\lambda_2$ of successfully delivering  measurements. The authors of \cite{Liu04}  proved the existence of a sharp transition curve for the stability of the iteration on the covariance matrices and show that, when one of the arrival probabilities is fixed, the critical value of the other one can be found by solving a series of linear matrix inequality (LMI) feasibility problems. In the previous works, the arrival probabilities were assumed to be i.i.d.~from a Bernoulli distribution. The case when the observations become available according to a Markov process modelling a Gilbert-Elliot channel has been considered in \cite{Mo12}. 

As a first contribution, we ``turn around'' the estimation method from \cite{Liu04} so as to apply it to \emph{linear 2C scheduling}. To this end, we read each channel $i \in \{1, 2\}$ with period $T_i = \lfloor \frac{1}{\lambda_i} \rfloor$,\footnote{Operator $\lfloor \cdot \rfloor$ takes the floor of the argument.} which ensures that the guarantees of \cite{Liu04} apply with the corresponding values of $\lambda_i$. To find a pair of arrival rates $(\lambda_1, \lambda_2)$, we optimize over a predefined set of candidate pairs from which we exclude infeasible values that lead to unstable estimates. The objective function balances low values of $\lambda_i$, so as to reduce channel usage, with a low trace of the error covariance matrix, so as to improve estimation accuracy.




Our larger objective is however, as stated, to devise a solution for the nonlinear case, and for that purpose, we first analyze \emph{stochastic-2C} Extended Kalman Filtering (EKF) for a class of discrete-time nonlinear systems in which the linearized transition dynamics vary in a polytope. For a given pair of arrival rates, we develop LMI conditions to establish boundedness of the covariance matrices and compute a worst-case upper bound. To our best knowledge, the present paper is the first to consider the nonlinear 2C setting. Stochastic stability of the discrete-time EKF has been investigated in~\cite{Reif99}, and the case with intermittent measurements on a \emph{single} channel has been analyzed in \cite{Kluge10, Wang13, Liu17}. While the discrete-time model considered in \cite{Kluge10} is quite general, it has the shortcoming that the measurement matrix, albeit time-varying, must be invertible. Specific variants of EKF with intermittent measurements have been developed for localization \cite{Ahmad13} and tracking \cite{Hicks12, Chen15, Shi16}. Stability of the unscented Kalman filter with intermittent observations has been analyzed in \cite{Li12}. 

Moreover, we consider \emph{2C scheduling} for the EKF, where we solve a similar optimization problem to the one from the linear case in order to get~$\lambda_i$, and then read the channels with the corresponding periods $T_i$. Differently from the linear case, we apply the newly developed EKF conditions. Since this solution may sometimes be conservative, we additionally propose an empirical, iterative application of the KF conditions, which assumes that the nonlinear dynamics are slowly varying. This last approach linearizes the nonlinear system around the current operating point and recomputes $\lambda_i$ by solving the linear-case optimization problem. The procedure is repeated when the dynamics deviate significantly from the previous linearization. 



To illustrate the approaches developed, we start with simulations in the linear KF case, since the approach in \cite{Liu04} was not validated numerically. Then, we apply 2C scheduling to estimate the pose of a nonlinear rigid-body system using an EKF. Finally, we present a real-life underwater robotics experiment where the onboard-channel is always on, and the UAV-camera-based positioning channel is read with our 2C-scheduling approach. In this experiment, the state estimate is validated against underwater acoustic positioning. 

Next, in Section \ref{sec:analysis}, we provide the analysis of the stochastic 2C EKF, followed by the methods for 2C scheduling in both the linear and nonlinear cases in Section \ref{sec:scheduling}. Simulation and real-robot experimental results are given in Sections \ref{sec:sim} and \ref{sec:exp}, respectively. Section \ref{sec:conclusions} concludes the paper.




\section{Analysis of Stochastic-2C Extended Kalman Filtering}
\label{sec:analysis}

The main theoretical contribution of the paper is to analyze the statistical properties of the covariance matrices for the EKF, when applied to a class of discrete-time nonlinear systems with intermittent 2C measurements.

We consider the discrete-time nonlinear system
\bea
\label{eq:full}
\vx_{k+1}&=f(\vx_k)+B\vu_{k}+w_{k},\\
\vy_{k}&=C\vx_{k}+v_{k},
\eea
where $\vx_{k}\in \mathcal{R}^{n_x}$ denotes the state at time $k$, $\vu\in \mathcal{R}^{n_u}$ is the input, $\vy \in \mathcal{R}^{n_y}$ is the measured output, and $w$ and $v$ are zero-mean white noises, with covariances $Q=Q^T> 0$ and $R=R^T>0$, respectively. $B$ is the input matrix and $f:\mathcal{R}^n\rightarrow \mathcal{R}^n$ is a vector function. Note that a linear input dependence is assumed.

We consider a scenario similar to \cite{Liu04}, with the measurement vector $\vy$ supplied by two sets of sensors, whose outputs are encoded separately and sent via two different channels. The output $\vy$ is consequently partitioned as
\bea
\bpm \vy_{k,1}\\ \vy_{k,2}\epm=\bpm C_1\\C_2 \epm \vx_{k}+\bpm v_{k,1}\\v_{k,2}\epm
\eea
with $\vy_{k}=\bpm \vy_{k,1}\\ \vy_{k,2}\epm$, $C=\bpm C_1\\C_2 \epm$, $v_{k}=\bpm v_{k,1}\\v_{k,2} \epm$. The measurement noises are $v_{1,k} \sim N(0,R_{11})$ and $v_{2,k} \sim N(0,R_{22})$, where $R=\bpm R_{11} &R_{12}\\R_{21}& R_{22}\epm $. The channels may be lossy and not all measurement packages are received. The arrival of measurement $\vy_{k,i}$, $i=1,\,2,$ at time $k$ is given by a binary variable, $\gamma_{k,i}$, sampled from a Bernoulli process with probability $P(\gamma_{k,i}=1)=\lambda_i$, $i=1,\,2$. We consider independent sensors and channels, so the probability of both measurements arriving at the same time is $\lambda_1 \lambda_2$.

To estimate the states of the system, we consider an EKF. We denote $A_{k}:=\frac{\partial f}{\partial \vx}|_{\xe_{k|k}}$ and develop general results. Later on, to obtain conditions that are easier to implement, we will consider the case when $A_{k} \in \textrm{Co}(\mathcal{A}_j)$, $j=1,\,2,\,\cdots,\,a$, where $\textrm{Co}(.)$ denotes the convex hull.

The time update is independent of the measurements and the predictions are based on model~\eqref{eq:full},~i.e.,
\bea
\label{eq:pred}
\xe_{k+1|k}&=f(\xe_{k|k})+B\vu_{k},\\
P_{k+1|k}&=A_{k}P_{k|k}A_{k}^T+Q,
\eea
where the usual notations are used, i.e., $k+1|k$ denotes prediction and so $\xe_{k+1|k}$ is the predicted state, $\xe_{k|k}$ is the estimated state after the $k$th measurement has been processed, $P_{k|k}$ is the estimator covariance matrix at the same moment, and so on.

Following \cite{Sinopoli04, Liu04}, the update equations depend on the measurements received. If no measurements are received, then only prediction is performed. Otherwise, the state and the covariance matrices are updated using the received measurements, i.e.,
\bea
\label{eq:state_cov_update}
&\xe_{k+1|k+1}=\xe_{k+1|k}+\\
&+\gamma_{k+1,1}\gamma_{k+1,2}P_{k+1|k}C^T(CP_{k+1|k}C^T+R)^{-1}\times\\
&\times(\vy_{k+1}-C\xe_{k+1|k})\\
&+\gamma_{k+1,1}(1-\gamma_{k+1,2})P_{k+1|k}C_1^T(C_1P_{k+1|k}C_1^T+R_{11})^{-1}\times\\
&\times(\vy_{k+1,1}-C_1\xe_{k+1|k})\\
&+(1-\gamma_{k+1,1})\gamma_{k+1,2}P_{k+1|k}C_2^T(C_2P_{k+1|k}C_2^T+R_{22})^{-1}\times\\
&\times(\vy_{k+1,2}-C_2\xe_{k+1|k}),\\
&P_{k+1|k+1}=P_{k+1|k}-\\
&-\gamma_{k+1,1}\gamma_{k+1,2}P_{k+1|k}C^T(CP_{k+1|k}C^T+R)^{-1}CP_{k+1|k}\\
&-\gamma_{k+1,1}(1-\gamma_{k+1,2})P_{k+1|k}C_1^T(C_1P_{k+1|k}C_1^T+R_{11})^{-1}\times\\ &\times C_1P_{k+1|k}\\
&-(1-\gamma_{k+1,1})\gamma_{k+1,2}P_{k+1|k}C_2^T(C_2P_{k+1|k}C_2^T+R_{22})^{-1}\times\\ &\times C_2P_{k+1|k}.
\eea
In what follows, we use the simplified notation $P_{k+1} := P_{k+1|k}$. Then, the predicted covariance matrix at sample $k+1$ can be expressed as
\bea
\label{eq:cov_pred}
&P_{k+1}=A_kP_{k}A_k^T+Q-\\
&-\gamma_{k,1}\gamma_{k,2}A_kP_{k}C^T(CP_{k}C^T+R)^{-1}CP_{k}A_k^T\\
&-\gamma_{k,1}(1-\gamma_{k,2})A_kP_{k}C_1^T(C_1P_{k}C_1^T+R_{11})^{-1}C_1P_{k}A_k^T\\
&-(1-\gamma_{k,1})\gamma_{k,2}A_kP_{k}C_2^T(C_2P_{k}C_2^T+R_{22})^{-1}C_2P_{k}A_k^T.
\eea

\begin{remark} Note that contrary to the linear case, the matrices $P_{k}$, $P_{k+1|k}$ are not the error covariance matrices. However, for simplicity, we will refer to them as such. Furthermore, since measurements may be lost, both $\xe$ and $P$ become random variables (as they depend on the random variables $\gamma_1$ and $\gamma_2$).
\end{remark}

In this setting, our goal is to determine conditions on the existence of an upper bound on the covariance matrices $P_k$, given the arrival probabilities $\lambda_1$ and $\lambda_2$ of the measurements; and to determine the minimal arrival probabilities $\lambda_1$ and $\lambda_2$ such that the covariance matrices remain bounded. 
In order to do this, we exploit and generalize some of the results presented in \cite{Liu04} and \cite{Sinopoli04}. Similar to the mentioned results, we define the functions:
\bea
\label{eq:gfunc}
&g_{\lambda_1 \lambda_2}(k,X):=A_kXA_k^T+Q-\\
&-\lambda_{1}\lambda_{2}A_kXC^T(CXC^T+R)^{-1}CXA_k^T\\
&-\lambda_{1}(1-\lambda_{2})A_kXC_1^T(C_1XC_1^T+R_{11})^{-1}C_1XA_k^T\\
&-(1-\lambda_{1})\lambda_{2}A_kXC_2^T(C_2XC_2^T+R_{22})^{-1}C_2XA_k^T
\eea
and
\bea
\label{eq:phifunc}
&\phi(k,K_k,K_{k,1},K_{k,2},X):=\\
&(1-\lambda_1)(1-\lambda_2)(A_kXA_k^T+Q)+\lambda_{1}\lambda_{2}(F_kXF_k^T+V_k)\\
&+\lambda_{1}(1-\lambda_{2})(F_{k,1}XF_{k,1}^T+V_{k,1})\\
&+(1-\lambda_{1})\lambda_{2}(F_{k,2}XF_{k,2}^T+V_{k,2}),
\eea
where $F_k=A_k+KC$, $F_{k,1}=A_k+K_{k,1}C_1$, $F_{k,2}=A_k+K_{k,2}C_2$, $V_k=Q+K_kRK_k^T$, $V_{k,1}=Q+K_{k,1}R_{11}K_{k,1}^T$, $V_{k,2}=Q+K_{k,2}R_{22}K_{k,2}^T$, and $X \ge 0$. We also define $K_{k,x}=-A_kXC^T(CXC^T+R)^{-1}$, $K_{k,1,x}=-A_kXC_2^T(C_1XC_1^T+R_{11})^{-1}$, $K_{k,2,x}=-A_kXC_2^T(C_2XC_2^T+R_{22})^{-1}$.

It is straightforward to see that in the linear case, when $A_k$ is constant, the problem reduces to that in \cite{Liu04}. Furthermore, for a fixed $k$, the properties in \cite{Liu04} and Lemmas 1 and 2 in \cite{Sinopoli04} hold. Specifically, for any given $k$, we have that $E(P_{k+1}|P_k)=g_{\lambda_1 \lambda_2}(k,P_k)$, $g_{\lambda_1 \lambda_2}(k,X)$ is concave and non-decreasing in $X$, $g_{\lambda_1 \lambda_2}(k,X) \ge (1-\lambda_1)(1-\lambda_2)A_kXA_k^T+Q$, and thus it is possible to obtain a lower bound on $E(g_{\lambda_1 \lambda_2}(k,P_k))$. Furthermore, with the definitions of $K_{k,x}$, $K_{k,1,x}$, and $K_{k,2,x}$, it follows that $\phi(k,K_{k,x},K_{k,1,x},K_{k,2,x},X)=\min_{K_{k,x},K_{k,1,x},K_{k,2,x}}\phi(k,K_k,K_{k,1},K_{k,2},X)$.

Next, we state some lemmas that will be useful for developing the boundedness conditions.

\begin{lemma}
\label{lem:iter_l}
Define the operator
\bea
\label{op:l}
&\mathcal{L}(k,Y):=(1-\lambda_1)(1-\lambda_2)(A_kYA_k^T)+\lambda_1\lambda_2F_kYF_k^T\\
&+\lambda_1(1-\lambda_2)F_{k,1}YF_{k,1}^T+(1-\lambda_1)\lambda_2F_{k,2}YF_{k,2}^T.
\eea
$\mathcal{L}(k,Y)$ is linear in $Y$ and $\mathcal{L}(k,Y)\ge 0$. Assume that there exists $\bar{Y}>0$ such that $\bar{Y}> \mathcal{L}(k,\bar{Y})$, $\forall k$. Then,
\begin{enumerate}
\item $\forall W\ge 0$, $\lim_{k \rightarrow \infty}\mathcal{L}(k,W)=0$.
\item Let $U_k\ge 0$ bounded and consider $Y_{k+1}=\mathcal{L}(k,Y_k)+U_k$ initialized at $Y_0$. Then, the sequence $Y_k$ is bounded.
\end{enumerate}
\end{lemma}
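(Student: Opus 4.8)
The plan is to first dispose of the elementary structural claims and then treat the two limit statements with a single uniform-contraction argument.

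\emph{Step 1 (linearity, positivity, monotonicity).} Each of the four summands in $\mathcal{L}(k,Y)$ has the form $c\,MYM^{T}$ with a nonnegative constant $c$ and $M\in\{A_k,F_k,F_{k,1},F_{k,2}\}$; since $Y\mapsto MYM^{T}$ is linear and a finite sum of linear maps is linear, $\mathcal{L}(k,\cdot)$ is linear. The coefficients $(1-\lambda_1)(1-\lambda_2)$, $\lambda_1\lambda_2$, $\lambda_1(1-\lambda_2)$, $(1-\lambda_1)\lambda_2$ are nonnegative (and in fact sum to $1$), and $MYM^{T}\ge 0$ whenever $Y\ge 0$, so $\mathcal{L}(k,Y)\ge 0$ for $Y\ge 0$. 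The same identity $M(Y_2-Y_1)M^{T}\ge 0$ gives monotonicity: $0\le Y_1\le Y_2\Rightarrow \mathcal{L}(k,Y_1)\le\mathcal{L}(k,Y_2)$, which I will use throughout.

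\emph{Step 2 (uniform contraction --- the crux).} From the hypothesis, $\bar Y-\mathcal{L}(k,\bar Y)>0$ for every $k$. Because $A_k$ ranges over the compact polytope $\textrm{Co}(\mathcal{A}_j)$ (so $A_k,F_k,F_{k,1},F_{k,2}$ lie in a compact set) and $\mathcal{L}(\cdot,\bar Y)$ depends continuously on these matrices, I would extract a single $r\in[0,1)$ with $\mathcal{L}(k,\bar Y)\le r\,\bar Y$ for all $k$. This is the step I expect to be the main obstacle: the bare assumption is only a family of strict inequalities, and turning the gap into a uniform multiplicative factor is exactly where compactness/continuity (or, absent the polytopic structure, a uniform-in-$k$ strict inequality hypothesis) is needed.

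\emph{Step 3 (claim 1).} Fix $W\ge 0$ and read $\mathcal{L}(k,W)$ as the $k$-th iterate of the recursion $W_0:=W$, $W_{j+1}:=\mathcal{L}(j,W_j)$. Since $\bar Y>0$, choose $\alpha>0$ with $W\le\alpha\bar Y$. Induction using linearity, monotonicity and Step 2 gives $0\le W_k\le\alpha r^{k}\bar Y$ (if $W_k\le\alpha r^{k}\bar Y$ then $W_{k+1}=\mathcal{L}(k,W_k)\le\alpha r^{k}\mathcal{L}(k,\bar Y)\le\alpha r^{k+1}\bar Y$), and $r<1$ forces $W_k\to 0$.

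\emph{Step 4 (claim 2).} Let $\Phi(k,j)$ denote the composition $\mathcal{L}(k-1,\cdot)\circ\cdots\circ\mathcal{L}(j,\cdot)$, with $\Phi(k,k):=\mathrm{id}$. Unrolling the recursion yields
\[
Y_k=\Phi(k,0)[Y_0]+\textstyle\sum_{j=0}^{k-1}\Phi(k,j+1)[U_j].
\]
Pick $\alpha,\beta>0$ with $Y_0\le\alpha\bar Y$ and $U_j\le\beta\bar Y$ for all $j$ (possible since $U_k$ is bounded and $\bar Y>0$). As in Step 3, $\Phi(k,j)[\bar Y]\le r^{\,k-j}\bar Y$, hence $\Phi(k,0)[Y_0]\le\alpha r^{k}\bar Y$ and $\Phi(k,j+1)[U_j]\le\beta r^{\,k-1-j}\bar Y$; summing the geometric series gives $0\le Y_k\le\big(\alpha+\tfrac{\beta}{1-r}\big)\bar Y$ for all $k$, so $Y_k$ is bounded. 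This completes the plan; the only non-routine ingredient is the uniform bound of Step 2, everything else being linearity/monotonicity bookkeeping.
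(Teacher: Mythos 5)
Your proof is correct and follows essentially the same route as the paper: the paper likewise extracts a single $r\in[0,1)$ with $\mathcal{L}(k,\bar{Y})<r\bar{Y}$ for all $k$ and then defers to the geometric-decay argument of Lemma 3 in \cite{Sinopoli04}, which is exactly what your Steps 3--4 spell out. Your remark in Step 2 that passing from the family of strict inequalities $\bar{Y}>\mathcal{L}(k,\bar{Y})$ to a uniform contraction factor requires compactness (e.g. the polytopic structure on $A_k$) identifies a genuine subtlety that the paper asserts without justification.
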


\begin{proof}
Based on the assumption that $\exists \bar{Y}>0$ so that $\bar{Y} > \mathcal{L}(k,\bar{Y})$, $\forall k$, one can choose $0\le r<1$ so that $\mathcal{L}(k,\bar{Y})<r \bar{Y}$, $\forall k$. The rest of the proof follows the same lines as that of Lemma 3 in \cite{Sinopoli04}. 
\end{proof}

\begin{lemma}
\label{lem:upper}
Consider the operator $\phi(k,K,K_1,K_2,X)$ defined in \eqref{eq:phifunc}. If there exist matrices $K_k$, $K_{k,1}$, $K_{k,2}$, and $\bar{P}>0$ so that $\bar{P}>\phi(k,K_k,K_{k,1},K_{k,2}, \bar{P})$, then the sequence $P_{k+1}=g_{\lambda_1, \lambda_2}(k, P_0)$ is bounded for any $P_0$.
\end{lemma}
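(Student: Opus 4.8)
The plan is to adapt the proof of Lemma~4 in \cite{Sinopoli04} (and of its two-channel counterpart in \cite{Liu04}) to the present time-varying setting, using Lemma~\ref{lem:iter_l} as the workhorse. Three ingredients are needed: (i) the bound $g_{\lambda_1\lambda_2}(k,X)\le\phi(k,K,K_1,K_2,X)$ for every $X\ge0$ and every triple of gains; (ii) monotonicity, i.e.\ $g_{\lambda_1\lambda_2}(k,\cdot)$ is non-decreasing, which is quoted from \cite{Liu04,Sinopoli04} and recalled above; and (iii) the identity $\phi(k,K_k,K_{k,1},K_{k,2},X)=\mathcal{L}(k,X)+U_k$ with $U_k\ge0$ bounded.

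For (i), I would first rewrite $A_kXA_k^T+Q$ using $1=(1-\lambda_1)(1-\lambda_2)+\lambda_1\lambda_2+\lambda_1(1-\lambda_2)+(1-\lambda_1)\lambda_2$, so that $g_{\lambda_1\lambda_2}(k,X)$ becomes the convex combination, with these four weights, of $A_kXA_k^T+Q$ and the three ``one-step updates'' $A_kXA_k^T+Q-A_kXC^T(CXC^T+R)^{-1}CXA_k^T$ and its $C_1,C_2$ analogues. For each of the latter, the completion-of-squares identity $A_kXA_k^T-A_kXC^T(CXC^T+R)^{-1}CXA_k^T=\min_{K}(A_k+KC)X(A_k+KC)^T+KRK^T$ (and likewise for $C_1,R_{11}$ and $C_2,R_{22}$) gives, for the chosen gains, the upper bounds $F_kXF_k^T+V_k$, $F_{k,1}XF_{k,1}^T+V_{k,1}$, $F_{k,2}XF_{k,2}^T+V_{k,2}$, while $A_kXA_k^T+Q$ bounds itself trivially. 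Weighting these four inequalities by the four probabilities and summing reproduces exactly $\phi(k,K,K_1,K_2,X)$ on the right, which proves (i).

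Next, fixing the gains $K_k,K_{k,1},K_{k,2}$ from the hypothesis, set $U_k:=(1-\lambda_1)(1-\lambda_2)Q+\lambda_1\lambda_2V_k+\lambda_1(1-\lambda_2)V_{k,1}+(1-\lambda_1)\lambda_2V_{k,2}\ge0$, so that $\phi(k,K_k,K_{k,1},K_{k,2},X)=\mathcal{L}(k,X)+U_k$. The hypothesis $\bar{P}>\phi(k,K_k,K_{k,1},K_{k,2},\bar{P})=\mathcal{L}(k,\bar{P})+U_k$ together with $\mathcal{L}(k,\bar{P})\ge0$ yields simultaneously $\bar{P}>\mathcal{L}(k,\bar{P})$ for all $k$ and $0\le U_k<\bar{P}$ for all $k$. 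Hence Lemma~\ref{lem:iter_l} applies with $\bar{Y}=\bar{P}$, and by part~(2) of that lemma the sequence defined by $Y_{k+1}=\mathcal{L}(k,Y_k)+U_k$, $Y_0=P_0$, is bounded.

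Finally, I would compare the two sequences by induction: $P_0=Y_0$, and if $0\le P_k\le Y_k$, then monotonicity of $g_{\lambda_1\lambda_2}(k,\cdot)$ followed by ingredient (i) give $P_{k+1}=g_{\lambda_1\lambda_2}(k,P_k)\le g_{\lambda_1\lambda_2}(k,Y_k)\le\phi(k,K_k,K_{k,1},K_{k,2},Y_k)=Y_{k+1}$, so $0\le P_k\le Y_k$ for all $k$ and the iterates $P_k$ are bounded. The step I expect to be the main obstacle is (i): one must check carefully that the probability-weighted sum of the four mode-wise bounds reproduces $\phi$ term by term (in particular, that the single $+Q$ in $g_{\lambda_1\lambda_2}$ is recovered from the four $Q$-contributions), and confirm that the concavity/monotonicity properties imported from the linear analyses of \cite{Liu04,Sinopoli04} indeed hold for a fixed $k$ with the linearization $A_k$ replacing a constant $A$.
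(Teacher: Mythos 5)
Your proof is correct and follows essentially the same route as the paper's: decompose $\phi(k,K_k,K_{k,1},K_{k,2},X)=\mathcal{L}(k,X)+U_k$, deduce $\bar{P}>\mathcal{L}(k,\bar{P})$ from the hypothesis, bound $g_{\lambda_1\lambda_2}(k,\cdot)\le\phi(k,K_k,K_{k,1},K_{k,2},\cdot)$, and invoke Lemma~\ref{lem:iter_l}. The only difference is that you spell out two steps the paper leaves implicit (the completion-of-squares argument for $g\le\phi$, and the monotonicity-based induction comparing $P_k$ with the auxiliary sequence $Y_k$), which is a faithful and indeed slightly more careful rendering of the same argument.
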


\begin{proof}
Define the matrices $F_{k}=A_k+K_kC$, $F_{k,1}=A_k+K_{k,1}C$, and $F_{k,2}=A_k+K_{k,2}C$ and consider the operator $\mathcal{L}(k,Y)$ defined in \eqref{op:l}. It is easy to verify that $\phi(k,K_k,K_{k,1},K_{k,2},X)=\mathcal{L}(k,X)+(1-\lambda_1)(1-\lambda_2)Q+\lambda_1\lambda_2 V_k+\lambda_1(1-\lambda_2)V_{k,1}+(1-\lambda_1)\lambda_2 V_{k,2}$, where $V_{k}=Q+K_kRK_k^T$, $V_{k,1}=Q+K_{k,1}R_{11}K_{k,1}^T$, $V_{k,2}=Q+K_{k,2}R_{22}K_{k,2}^T$, i.e., $\phi(k,K_k,K_{k,1},K_{k,2},X)=\mathcal{L}(k,X)+U_k$, with $U_k$ being defined as $U_k=(1-\lambda_1)(1-\lambda_2)Q+\lambda_1\lambda_2 V_k+\lambda_1(1-\lambda_2)V_{k,1}+(1-\lambda_1)\lambda_2 V_{k,2}$ and, since $Q>0$, $R>0$, $R_{11}>0$, $R_{22}>0$, $U_k>0$. Using the assumption that $\bar{P}>\phi(k,K_k,K_{k,1},K_{k,2}, \bar{P})$, we have $\bar{P}> \mathcal{L}(k,\bar{P})+U_k>\mathcal{L}(k,\bar{P})$.

On the other hand, we also have $P_{k+1}=g_{\lambda_1 \lambda_2}(k,P_k)\le \phi(k,K_k,K_{k,1},K_{k,2},P_k)=\mathcal{L}(k,P_k)+U_k$. Based on Lemma~\ref{lem:iter_l}, the sequence $P_k$ is bounded. 
\end{proof}

We are now ready to state the main result on the boundedness of the covariance matrices:
\begin{theorem}
\label{th:boundedness}
Consider the operator $\phi(k,K_k,K_{k,1},K_{k,2},X)$ defined in \eqref{eq:phifunc}.
Assume that there exist matrices $K_k$, $K_{k,1}$, $K_{k,2}$ and a positive matrix $P=P^T>0$ so that $P>\phi(k,K_k,K_{k,1},K_{k,2},P)$. Then, for any initial condition $P_0\ge 0$, $\lim_{k\rightarrow\infty}P_k=\lim_{k\rightarrow \infty}g_{\lambda_1 \lambda_2}(k,P_0)$ is bounded.
\end{theorem}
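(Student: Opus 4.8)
The plan is to obtain Theorem~\ref{th:boundedness} from Lemma~\ref{lem:upper} by passing from the random recursion \eqref{eq:cov_pred} to a deterministic majorizing one. The two facts I would lean on are exactly those recorded just before the lemmas: for each fixed $k$, $E(P_{k+1}\mid P_k)=g_{\lambda_1 \lambda_2}(k,P_k)$, and $g_{\lambda_1 \lambda_2}(k,\cdot)$ is concave and non-decreasing in its matrix argument. These are the two-channel counterparts of Lemmas~1--2 in \cite{Sinopoli04}, and they carry over for the same reason, since the indicators $\gamma_{k,1},\gamma_{k,2}$ enter \eqref{eq:cov_pred} just as the single-channel indicator does there; the operator $g_{\lambda_1 \lambda_2}(k,\cdot)$ is moreover, by the optimal-gain characterization stated after \eqref{eq:phifunc}, the pointwise minimum of $\phi(k,K,K_1,K_2,\cdot)$ over the gains.

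First I would take total expectations and apply (matrix) Jensen's inequality together with concavity to get $E(P_{k+1})=E\big(g_{\lambda_1 \lambda_2}(k,P_k)\big)\le g_{\lambda_1 \lambda_2}\big(k,E(P_k)\big)$. Then I would introduce the deterministic comparison sequence $\widehat P_0:=P_0$, $\widehat P_{k+1}:=g_{\lambda_1 \lambda_2}(k,\widehat P_k)$, and show $E(P_k)\le\widehat P_k$ for all $k$ by induction: the base case is an equality, and if $E(P_k)\le\widehat P_k$ then monotonicity of $g_{\lambda_1 \lambda_2}(k,\cdot)$ yields $E(P_{k+1})\le g_{\lambda_1 \lambda_2}(k,E(P_k))\le g_{\lambda_1 \lambda_2}(k,\widehat P_k)=\widehat P_{k+1}$.

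Second I would invoke Lemma~\ref{lem:upper}. Its hypothesis — existence of $K_k,K_{k,1},K_{k,2}$ and $\bar{P}=\bar{P}^T>0$ with $\bar{P}>\phi(k,K_k,K_{k,1},K_{k,2},\bar{P})$ for every $k$ — is precisely what is assumed in the theorem (take $\bar{P}=P$), so the $g$-iteration is bounded, in particular the sequence $\widehat P_k$ started at $P_0$; hence $E(P_k)\le\widehat P_k$ is uniformly bounded in $k$, which is the boundedness assertion. To see that the asymptotic regime is independent of the initial condition (the content of writing $\lim_{k\to\infty}P_k=\lim_{k\to\infty}g_{\lambda_1 \lambda_2}(k,P_0)$), I would compare two $g$-iterations started from $P_0',P_0''\ge0$: evaluating $\phi$ at the gains that are optimal for the smaller argument gives, for $X\ge Y\ge0$, $0\le g_{\lambda_1 \lambda_2}(k,X)-g_{\lambda_1 \lambda_2}(k,Y)\le \mathcal{L}(k,X-Y)$ for the operator \eqref{op:l} built from those gains; iterating this and using item~1 of Lemma~\ref{lem:iter_l}, which forces $\mathcal{L}(k,W)\to0$, shows the gap between the two iterations vanishes, so through $E(P_k)\le\widehat P_k$ the covariance settles into the same bounded asymptotic regime regardless of $P_0$, following the single-channel argument of \cite{Sinopoli04}.

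The step I expect to be the main obstacle is the Jensen inequality in the matrix setting: legitimizing $E\big(g_{\lambda_1 \lambda_2}(k,P_k)\big)\le g_{\lambda_1 \lambda_2}(k,E(P_k))$ requires $g_{\lambda_1 \lambda_2}(k,\cdot)$ to be \emph{operator}-concave rather than concave only along rays, and this has to be established for the two-channel operator \eqref{eq:gfunc} (as a positive convex combination of the single-channel maps, each of which is operator-concave) rather than merely quoted. Once concavity and monotonicity are secured, the induction and the appeal to Lemma~\ref{lem:upper} are routine, and the independence-from-$P_0$ part reduces to Lemma~\ref{lem:iter_l}; a minor caveat is that, with time-varying $A_k$, the symbol $\lim_{k\to\infty}$ should be read as a statement about the asymptotic behaviour of the sequence rather than a claim that a genuine limit exists.
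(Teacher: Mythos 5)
Your proposal is correct and follows essentially the same route as the paper: the paper gives no separate proof of Theorem~\ref{th:boundedness}, treating it as an immediate consequence of Lemma~\ref{lem:upper} (whose hypothesis is exactly the theorem's assumption), and your argument likewise reduces to that lemma via the chain $g_{\lambda_1\lambda_2}(k,X)\le\phi(k,K_k,K_{k,1},K_{k,2},X)=\mathcal{L}(k,X)+U_k$ and Lemma~\ref{lem:iter_l}. The extra material you supply --- the Jensen/concavity step linking $E(P_k)$ to the deterministic $g$-iteration and the initial-condition-independence argument --- fills in details the paper leaves implicit by citation to \cite{Sinopoli04}, and the operator-concavity worry you flag is settled by the pointwise-minimum representation $g_{\lambda_1\lambda_2}(k,X)=\min_{K,K_1,K_2}\phi(k,K,K_1,K_2,X)$ already recorded in the text, since a minimum of maps affine in $X$ is concave in the L\"owner order.
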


\begin{remark}
Contrary to the results in \cite{Liu04}, \cite{Sinopoli04}, a single point of transition or a transition curve between boundedness and unboundedness of $P_k$ in general will not exist. However, a worst-case upper bound on the critical probabilities $\lambda_1$ and $\lambda_2$ can be computed similarly to Theorem 2 of \cite{Liu04}.
\end{remark}

If $(\lambda_1,\,\lambda_2)$ is such that boundedness is maintained, a limit on the bound of $P_k$ is given by the following theorem:
\begin{theorem}
Assume that $(A_k,Q)$ is controllable, $(A_k, C)$ is detectable $\forall k$ and the pair $(\lambda_1, \lambda_2)$ is such that $P_k$ is bounded. Then, $\lim_{k\rightarrow\infty}P_k\le V$, where $V >g_{\lambda_1 \lambda_2}(k,V)$, $V>0$, $\forall k$.
\end{theorem}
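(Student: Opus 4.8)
The plan is to majorize the deterministic recursion $P_{k+1}=g_{\lambda_1\lambda_2}(k,P_k)$ (in the sense of Theorem~\ref{th:boundedness}, where $P_k$ denotes the iterated expected covariance) by a time-varying \emph{linear} recursion for which the given $V$ is a strict supersolution, and then to apply Lemma~\ref{lem:iter_l}. First I would record the completing-the-squares identity, established exactly as in \cite{Sinopoli04,Liu04} for each fixed $k$: namely $g_{\lambda_1\lambda_2}(k,X)=\phi(k,K_{k,x},K_{k,1,x},K_{k,2,x},X)=\min_{K_k,K_{k,1},K_{k,2}}\phi(k,K_k,K_{k,1},K_{k,2},X)$, so that $g_{\lambda_1\lambda_2}(k,X)\le\phi(k,\bar{K}_k,\bar{K}_{k,1},\bar{K}_{k,2},X)$ for \emph{any} gains and every $X\ge 0$. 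I would then freeze the gains at the minimizers of $\phi(k,\cdot,V)$, i.e.\ $K_{k,x},K_{k,1,x},K_{k,2,x}$ evaluated at $X=V$, and, exactly as in the proof of Lemma~\ref{lem:upper}, split $\phi(k,\bar{K}_k,\bar{K}_{k,1},\bar{K}_{k,2},X)=\mathcal{L}(k,X)+U_k$ with $U_k>0$ and $\mathcal{L}(k,\cdot)$ the associated linear operator of the form in Lemma~\ref{lem:iter_l}. By the choice of gains and the hypothesis $V>g_{\lambda_1\lambda_2}(k,V)$ one gets $\mathcal{L}(k,V)+U_k=g_{\lambda_1\lambda_2}(k,V)<V$, hence in particular $\mathcal{L}(k,V)<V$ for all $k$, so $\bar{Y}=V$ meets the assumption of Lemma~\ref{lem:iter_l}.

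Next I would run the comparison. From $P_{k+1}=g_{\lambda_1\lambda_2}(k,P_k)\le\mathcal{L}(k,P_k)+U_k$ and monotonicity of $\mathcal{L}(k,\cdot)$ on the positive-semidefinite cone, induction gives $P_k\le Y_k$, where $Y_{k+1}=\mathcal{L}(k,Y_k)+U_k$, $Y_0=P_0$. Since this recursion is linear and monotone in its initial condition, $Y_k\le Y_k'$ with $Y_k'$ the same recursion started from $P_0+V$; and by linearity (superposition) $Y_k'=Y_k''+\Phi_k(P_0)$, where $Y_k''$ is the recursion started from $V$ and $\Phi_k$ denotes the homogeneous composition $\mathcal{L}(k-1,\cdot)\circ\cdots\circ\mathcal{L}(0,\cdot)$. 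A one-line induction using $\mathcal{L}(k,V)+U_k<V$ together with monotonicity yields $Y_k''\le V$ for all $k$, while Lemma~\ref{lem:iter_l}(1) gives $\Phi_k(P_0)\to 0$. Combining, $\limsup_{k\to\infty}P_k\le\limsup_{k\to\infty}Y_k\le V$. To upgrade the $\limsup$ to an honest limit I would invoke the controllability of $(A_k,Q)$ and detectability of $(A_k,C)$: these are the classical hypotheses under which the Riccati-type recursion forgets its initial condition and converges, so that, once boundedness is known, $\lim_{k\to\infty}P_k$ exists and the previous bound gives $\lim_{k\to\infty}P_k\le V$.

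The step I expect to be the main obstacle is the comparison argument in the genuinely time-varying (polytopic $A_k$) setting. Unlike in \cite{Sinopoli04,Liu04}, the operators $g_{\lambda_1\lambda_2}(k,\cdot)$ and $\mathcal{L}(k,\cdot)$ now change with $k$, so the convenient monotone-in-time convergence of the sequence started at $0$ used in \cite{Sinopoli04} is unavailable, and everything must instead be routed through the uniformly stable linear majorant and the single uniform supersolution $V$. Extra care is needed because $P_0-V$ is in general indefinite, which is precisely why I would phrase the argument via the monotone perturbation $P_0+V$ rather than a (nonexistent) positive-semidefinite $\max(P_0,V)$. A secondary subtlety is making precise the sense in which $\lim_{k\to\infty}P_k$ exists; this is exactly where the controllability and detectability assumptions enter, and it is the only place they are used.
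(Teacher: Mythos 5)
Your proposal is correct and follows essentially the same route the paper intends: its one-line proof defers to Theorem~6 of \cite{Sinopoli04}, and your argument is exactly that adaptation, carried out with the paper's own tools (the completing-the-squares identity $g_{\lambda_1\lambda_2}(k,X)=\min\phi(k,\cdot,X)$, the split $\phi=\mathcal{L}(k,\cdot)+U_k$ from Lemma~\ref{lem:upper} with gains frozen at the supersolution $V$, and the comparison/vanishing-homogeneous-part machinery of Lemma~\ref{lem:iter_l}). The only residual soft spot --- that in the genuinely time-varying case controllability and detectability at each $k$ do not by themselves guarantee existence of $\lim_{k\to\infty}P_k$, so the rigorous conclusion is $\limsup_{k\to\infty}P_k\le V$ --- is shared with the paper's own statement and is not a defect of your argument relative to it.
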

\begin{proof}
Follows the lines of Theorem 6 in \cite{Sinopoli04}, taking into account that $V>0$ and $V >g_{\lambda_1 \lambda_2}(k,V)$ have to hold~$\forall k$. 
\end{proof}

Next, we formulate a theorem to compute a worst-case upper bound on the probabilities.
\begin{theorem}
\label{th:lambdas}
If $(A_k,Q)$ is controllable, $(A_k, C)$ is detectable $\forall k$, then the following statements are equivalent:
\begin{enumerate}
\item there exists $\bar{X}>0$ such that $\bar{X} > g_{\lambda_1 \lambda_2}(k,\bar{X})$, $\forall k$,
\item there exist $K_k$, $K_{k,1}$, $K_{k,2}$ and $\bar{X}>0$ so that $\bar{X}>\phi(k,K_k,K_{k,1},K_{k,2},\bar{X})$, $\forall k$,
\item there exist $\bar{Z}_k$, $\bar{Z}_{k,1}$, $\bar{Z}_{k,2}$, and $0<\bar{Y}\le I$ such that $\forall k$
\bea
\label{eq:lmi_lam}
\Psi_k=\bpm
Y&\sqrt{\lambda_1 \lambda_2}(YA_k+Z_kC)&\Psi_{13} &\Psi_{14} \\
(*)&Y&0&0\\
(*)&(*)&Y&0\\
(*)&(*)&(*)&Y
\epm>0,
\eea
where $\Psi_{13}=\sqrt{\lambda_1 (1-\lambda_2)}(YA_k+Z_{k,1}C_1)$, $\Psi_{14}=\sqrt{\lambda_2 (1-\lambda_1)}(YA_k+Z_{k,2}C_2)$ and $(*)$ denotes the symmetric term.
\end{enumerate}
\end{theorem}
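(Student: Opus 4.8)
The plan is to establish the chain $(1) \Leftrightarrow (2) \Leftrightarrow (3)$. The equivalence $(1) \Leftrightarrow (2)$ is essentially the content of Lemma~\ref{lem:upper} together with its converse; for the forward direction I would use the minimizing property of $\phi$ noted above, namely that with the choices $K_k = K_{k,x}$, $K_{k,1}=K_{k,1,x}$, $K_{k,2}=K_{k,2,x}$ one has $\phi(k,K_{k,x},K_{k,1,x},K_{k,2,x},\bar X) = \min_{K}\phi(\cdots) $, and in fact this minimum equals $g_{\lambda_1\lambda_2}(k,\bar X)$ (this identity is the algebraic heart and mirrors Lemma~1 in \cite{Sinopoli04} adapted to the 2C setting). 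Hence $\bar X > g_{\lambda_1\lambda_2}(k,\bar X)$ immediately gives matrices satisfying~$(2)$; conversely, if $(2)$ holds for some $K$'s, then $g_{\lambda_1\lambda_2}(k,\bar X) = \min_K \phi(k,K,\cdots,\bar X) \le \phi(k,K_k,K_{k,1},K_{k,2},\bar X) < \bar X$, giving~$(1)$. The controllability/detectability hypotheses are not needed for $(1)\Leftrightarrow(2)$ but are carried along for~$(3)$.

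For $(2) \Leftrightarrow (3)$ the idea is a Schur-complement and congruence argument, following Theorem~2 of \cite{Liu04}. Starting from $(2)$, I would substitute $F_k = A_k + K_k C$, etc., into~\eqref{eq:phifunc} and write the inequality $\bar X > \phi(\cdots,\bar X)$ as $\bar X - (1-\lambda_1)(1-\lambda_2)A_k \bar X A_k^T - \lambda_1\lambda_2 F_k \bar X F_k^T - \lambda_1(1-\lambda_2) F_{k,1}\bar X F_{k,1}^T - (1-\lambda_1)\lambda_2 F_{k,2}\bar X F_{k,2}^T > (1-\lambda_1)(1-\lambda_2)Q + \lambda_1\lambda_2 V_k + \lambda_1(1-\lambda_2)V_{k,1} + (1-\lambda_1)\lambda_2 V_{k,2}$; since the right-hand side is $>0$, this is equivalent to the strict inequality $\bar X > \mathcal{L}(k,\bar X)$ being ``robustly'' satisfiable, and the quadratic terms $\lambda_1\lambda_2 F_k \bar X F_k^T$ etc.\ can be pulled out by repeated Schur complements into the block structure of~$\Psi_k$. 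The change of variables is the standard linearizing one: set $Y = \bar X^{-1}$, and $Z_k = Y K_k$, $Z_{k,1} = Y K_{k,1}$, $Z_{k,2} = Y K_{k,2}$, so that $Y A_k + Z_k C = Y(A_k + K_k C) = Y F_k$, and each diagonal $Y$ block arises from a Schur complement of a term of the form $F_k \bar X F_k^T = F_k Y^{-1} F_k^T$. Pre- and post-multiplying by $\mathrm{diag}(Y,I,I,I)$ and using $Y\bar X Y = Y$ converts the inequality into~\eqref{eq:lmi_lam}; the normalization $0 < \bar Y \le I$ is obtained by scaling, since~$(2)$ and~$(3)$ are both homogeneous under positive scaling of the certificate. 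Reversing every step gives~$(3)\Rightarrow(2)$.

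The main obstacle I anticipate is bookkeeping the three distinct measurement-update terms simultaneously through the Schur complements: unlike the single-channel case of \cite{Sinopoli04}, here four quadratic forms ($A_k$-term plus the three gain-dependent terms) must be eliminated at once, producing the $4\times4$ block matrix, and one must verify that the off-diagonal $\Psi_{1j}$ entries carry exactly the weights $\sqrt{\lambda_1\lambda_2}$, $\sqrt{\lambda_1(1-\lambda_2)}$, $\sqrt{\lambda_2(1-\lambda_1)}$ with no cross terms between the three channel blocks. Because the channels enter $\phi$ additively (no products of different $F$'s), the cross terms vanish and the elimination decouples block by block, so this is a matter of care rather than a genuine difficulty. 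Also, I would note that controllability and detectability are invoked only to guarantee that the certificate $\bar X$ can be taken strictly positive definite (as opposed to merely positive semidefinite) and to ensure $Y = \bar X^{-1}$ exists, exactly as in \cite{Liu04, Sinopoli04}, so I would cite those references for that implication rather than reprove it.
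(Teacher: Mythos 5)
The paper states Theorem~\ref{th:lambdas} without any proof (it implicitly defers to Theorem~5 of [Sinopoli04] and Theorem~2 of [Liu04]), so your reconstruction must be judged on its own merits; the route you choose is indeed the standard and intended one. Your treatment of $(1)\Leftrightarrow(2)$ is correct: the completion-of-squares identity $\min_{K_k,K_{k,1},K_{k,2}}\phi(k,K_k,K_{k,1},K_{k,2},X)=g_{\lambda_1\lambda_2}(k,X)$, attained at $K_{k,x}$, $K_{k,1,x}$, $K_{k,2,x}$, does hold (the four convex weights sum to one, so the $A_kXA_k^T+Q$ contributions recombine and the three subtracted terms match \eqref{eq:gfunc}), and both directions follow exactly as you describe.

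The gap is in $(2)\Leftrightarrow(3)$. The inequality $\bar X>\phi(k,\cdot,\bar X)=\mathcal{L}(k,\bar X)+U_k$ contains \emph{four} quadratic forms in $\bar X$, namely $(1-\lambda_1)(1-\lambda_2)A_k\bar XA_k^T$ plus the three gain-dependent ones, yet you assert that eliminating them by Schur complements ``produc[es] the $4\times4$ block matrix.'' It cannot: four quadratic forms yield four diagonal $Y$-blocks, i.e., a $5\times5$ LMI with an additional first-row block $\sqrt{(1-\lambda_1)(1-\lambda_2)}\,YA_k$ (compare the $\sqrt{1-\lambda}\,YA$ block in Theorem~5 of [Sinopoli04] and the five-block LMI of [Liu04]). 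The $4\times4$ matrix $\Psi_k$ in \eqref{eq:lmi_lam} encodes only $\bar X>\lambda_1\lambda_2F_k\bar XF_k^T+\lambda_1(1-\lambda_2)F_{k,1}\bar XF_{k,1}^T+(1-\lambda_1)\lambda_2F_{k,2}\bar XF_{k,2}^T$, which is strictly weaker than $(2)$: with $\lambda_1=\lambda_2=0$ it degenerates to $Y>0$, always feasible, whereas $(2)$ then reads $\bar X>A_k\bar XA_k^T+Q$ and forces $A_k$ to be Schur. So the direction $(3)\Rightarrow(2)$ of your argument fails for the LMI as printed. Either flag that $\Psi_k$ should carry the extra row and column (most likely, given the cited sources, and note your own count of ``four quadratic forms'' already contradicts a $4\times4$ output), or supply an argument for dropping the $A_k\bar XA_k^T$ term --- and none exists for unstable $A_k$. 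The remaining mechanics (change of variables $Y=\bar X^{-1}$, $Z_k=YK_k$, congruence by $\mathrm{diag}(Y,I,\dots,I)$, normalization $\bar Y\le I$ by homogeneity) are fine, and the absence of cross terms between channel blocks is correctly justified by the additive structure of $\phi$.
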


The condition formulated in \eqref{eq:lmi_lam} is bilinear, due to having to search for $\lambda_1$, $\lambda_2$, $Y$, etc. However, if either $\lambda_1$ or $\lambda_2$ is fixed, then bisection can be used to find the other probability, i.e., one will have to solve a set of LMI feasibility problems. Once a suitable pair $(\lambda_1,\,\lambda_2)$ is found, an upper bound on the matrix $P_k$ can be computed as in the next theorem.

\begin{theorem}
\label{th:cov}If there exist $\bar{Z}_k$, $\bar{Z}_{k,1}$, $\bar{Z}_{k,2}$ and $0<\bar{Y}\le I$ such that \eqref{eq:lmi_lam} is satisfied, then an upper bound on $\lim_{k \rightarrow \infty}g_{\lambda_1 \lambda_2}(k,V)$ can be found  by solving
\bea \label{eq:tracebound}
&\textrm{argmax}_V \textrm{Trace}(V)\\
&\textrm{subject to } V>0,\,\Gamma(V)\ge 0
\eea
where
\bea
\Gamma (V)=\bpm
A_kVA_k^T+Q-V&\sqrt{\lambda_1 \lambda_2}A_kVC^T&\Gamma_{13} &\Gamma_{14} \\
(*)&CVC^T+R&0&0\\
(*)&(*)&\Gamma_{33}&0\\
(*)&(*)&(*)&\Gamma_{44}
\epm
\eea
$\Gamma_{13}=\sqrt{\lambda_1 (1-\lambda_2)}A_kVC_1^T$, $\Gamma_{14}=\sqrt{\lambda_2 (1-\lambda_1)}A_kVC_2^T$, $\Gamma_{33}=C_1VC_1^T+R_{11}$, $\Gamma_{44}=C_2VC_2^T+R_{22}$.
\end{theorem}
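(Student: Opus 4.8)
The plan is to follow the template of Theorem~6 in \cite{Sinopoli04} and Theorem~2 in \cite{Liu04}, the only genuinely new ingredient being a Schur-complement reformulation adapted to the two-channel operator $g_{\lambda_1\lambda_2}$. The first step I would carry out is to show that, for $V>0$, the constraint $\Gamma(V)\ge 0$ is equivalent to $g_{\lambda_1\lambda_2}(k,V)\ge V$. Since $R$, $R_{11}$, $R_{22}$ are positive definite and $V>0$, the diagonal blocks $CVC^T+R$, $\Gamma_{33}=C_1VC_1^T+R_{11}$, $\Gamma_{44}=C_2VC_2^T+R_{22}$ are positive definite, so the Schur complement of $\Gamma(V)$ with respect to its block-diagonal lower-right $3\times 3$ part is well defined and equals $A_kVA_k^T+Q-V$ minus the three terms $\lambda_1\lambda_2 A_kVC^T(CVC^T+R)^{-1}CVA_k^T$, $\lambda_1(1-\lambda_2)A_kVC_1^T(\cdot)^{-1}C_1VA_k^T$ and $(1-\lambda_1)\lambda_2 A_kVC_2^T(\cdot)^{-1}C_2VA_k^T$; that is, it equals exactly $g_{\lambda_1\lambda_2}(k,V)-V$. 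Hence the requirement ``$\Gamma(V)\ge 0$, $\forall k$'' is equivalent to $V\le g_{\lambda_1\lambda_2}(k,V)$, $\forall k$.

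The second step is to identify the maximizer of \eqref{eq:tracebound} with the limiting bound on $P_k$. Feasibility of \eqref{eq:lmi_lam} is, by Theorem~\ref{th:lambdas}, equivalent to the existence of $\bar X>0$ with $\bar X>g_{\lambda_1\lambda_2}(k,\bar X)$, $\forall k$; together with the controllability/detectability hypotheses inherited from Theorem~\ref{th:lambdas}, this makes the deterministic recursion $V_{k+1}=g_{\lambda_1\lambda_2}(k,V_k)$ started at $V_0=0$ non-decreasing (because $g_{\lambda_1\lambda_2}(k,\cdot)$ is non-decreasing and $g_{\lambda_1\lambda_2}(k,0)\ge Q>0$) and bounded above by $\bar X$, hence convergent to some $\bar V$. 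Using concavity of $g_{\lambda_1\lambda_2}(k,\cdot)$ together with Jensen's inequality applied to $E(P_{k+1}\mid P_k)=g_{\lambda_1\lambda_2}(k,P_k)$, exactly as in Lemma~2 of \cite{Sinopoli04}, one obtains $\lim_{k\to\infty}P_k\le\bar V$, so $\bar V$ is the claimed upper bound. It then remains to verify that $\bar V$ solves \eqref{eq:tracebound}: it is feasible, since passing to the limit in $V_{k+1}\le g_{\lambda_1\lambda_2}(k,V_{k+1})$ yields $\bar V\le g_{\lambda_1\lambda_2}(k,\bar V)$, i.e. $\Gamma(\bar V)\ge 0$ by the first step; and it dominates every feasible point, because if $V>0$ satisfies $V\le g_{\lambda_1\lambda_2}(k,V)$, $\forall k$, then iterating $g_{\lambda_1\lambda_2}$ from $V$ produces a non-decreasing bounded sequence whose limit is again $\bar V$ (the limit of the deterministic recursion being independent of the initialization under detectability, as in \cite{Sinopoli04, Liu04}), so $V\le\bar V$ and therefore $\textrm{Trace}(V)\le\textrm{Trace}(\bar V)$.

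I expect the main obstacle to be precisely this last monotone/maximality step in the time-varying setting: unlike the LTI case of \cite{Sinopoli04, Liu04}, $g_{\lambda_1\lambda_2}(k,\cdot)$ has in general no single fixed point, so $\bar V$ must be introduced as the limit of the worst-case deterministic iteration, and one must argue carefully both that it upper-bounds $\lim_{k\to\infty}P_k$ and that it dominates the entire feasible set of \eqref{eq:tracebound}. A secondary, more technical point is that $\Gamma(V)$ depends \emph{quadratically} on $A_k$ through its $(1,1)$ block, so reducing ``$\Gamma(V)\ge 0$, $\forall k$'' to finitely many LMIs at the vertices $\mathcal{A}_j$ of $\textrm{Co}(\mathcal{A}_j)$ requires an additional Schur complement or congruence transformation to render the dependence on $A_k$ affine, analogously to the manipulation that turned condition~2 of Theorem~\ref{th:lambdas} into the LMI \eqref{eq:lmi_lam}.
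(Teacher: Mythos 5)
Your first step is correct and is exactly the computation the theorem is built on: for $V>0$ the three lower-right diagonal blocks of $\Gamma(V)$ are positive definite, and the Schur complement of $\Gamma(V)$ with respect to that block-diagonal part is precisely $g_{\lambda_1\lambda_2}(k,V)-V$, so $\Gamma(V)\ge 0$ for all $k$ is equivalent to $V\le g_{\lambda_1\lambda_2}(k,V)$ for all $k$. Be aware that the paper offers no proof of this theorem at all; the nearest thing is the unnumbered theorem preceding it, whose proof is declared to ``follow the lines of Theorem 6 in \cite{Sinopoli04}.'' So your route is the same one the paper implicitly relies on, and your secondary remark about the quadratic dependence on $A_k$ is exactly what Proposition~\ref{prop:cov} later handles via $A_kVA_k^T\ge A_k+A_k^T-V^{-1}$ and a further Schur complement.

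The genuine gap is in your second step, and it is the one you name but do not close. First, the claim that $V_{k+1}=g_{\lambda_1\lambda_2}(k,V_k)$ started at $V_0=0$ is non-decreasing is false in the time-varying setting: monotonicity of each $g_{\lambda_1\lambda_2}(k,\cdot)$ in its matrix argument gives $V_{k+1}\ge g_{\lambda_1\lambda_2}(k,V_{k-1})$ from $V_k\ge V_{k-1}$, but the induction also needs $g_{\lambda_1\lambda_2}(k,V_{k-1})\ge g_{\lambda_1\lambda_2}(k-1,V_{k-1})=V_k$, which compares two \emph{different} maps and has no reason to hold when $A_k\neq A_{k-1}$. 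Hence $\bar V$ as ``the limit of the worst-case deterministic iteration'' is not well defined, and the initialization-independence you invoke is likewise an LTI fact from \cite{Sinopoli04,Liu04} that does not transfer. Second, there is a direction-of-inequality issue you should confront explicitly: the feasible set of \eqref{eq:tracebound} consists of \emph{subsolutions} $V\le g_{\lambda_1\lambda_2}(k,V)$, whereas the bound-propagation argument via Jensen's inequality (and the paper's own preceding unnumbered theorem) needs a \emph{supersolution} $V\ge g_{\lambda_1\lambda_2}(k,V)$ to conclude $E(P_{k+1})\le g_{\lambda_1\lambda_2}(k,E(P_k))\le V$. In the LTI case the trace-maximal subsolution, the trace-minimal supersolution, and the fixed point all coincide, so the distinction is invisible; for time-varying $A_k$ identifying the maximizer of \eqref{eq:tracebound} with a valid upper bound on $\lim_{k\to\infty}E(P_k)$ requires an argument that neither you nor the paper supplies. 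A clean fix would be to restate the optimization over supersolutions (reversing the sign of the $(1,1)$ block and minimizing the trace), for which the propagation argument closes directly for every $k$ without any fixed-point or monotone-limit machinery.
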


In the developments so far we have considered that the state matrices $A_k$ vary in time without any further constraints. This means that the number of conditions to be solved for Theorems~\ref{th:lambdas} and~\ref{th:cov}, respectively, is infinite. In practice, however, a domain (possibly overestimated) in which the matrices vary can usually be determined. Therefore let us now consider the case when $A_{k} \in \textrm{Co}(\mathcal{A}_j)$, $j=1,\,2,\,\cdots,\,a$, and there exist functions $h_j(\cdot)$ such that $h_j(k) \ge 0$, $\sum_{j=1}^ah_j(k)=1$ and $A_k=\sum_{j=1}^ah_j(k)\mathcal{A}_j$, i.e., each matrix $A_k$ can be expressed as the convex combination of the matrices $\mathcal{A}_j$, $j=1,\,2,\,\cdots,\,a$. In such a case, a sufficient condition for condition \eqref{eq:lmi_lam} to hold~is 
\begin{proposition}
\label{prop:lmipoly}
If there exist $\mathcal{Z}_j$, $\mathcal{Z}_{j,1}$, $\mathcal{Z}_{j,2}$ and $0<\bar{Y}\le I$ such that $\forall k$
\bea
\label{eq:lmi_lam_poly}
\Psi^2_j=\bpm
Y&\sqrt{\lambda_1 \lambda_2}(Y\mathcal{A}_j+\mathcal{Z}_jC)&\Psi^2_{13} &\Psi^2_{14} \\
(*)&Y&0&0\\
(*)&(*)&Y&0\\
(*)&(*)&(*)&Y
\epm>0
\eea
holds for $j=1,\,2,\,\dots,\,a$, where $\Psi^2_{13}=\sqrt{\lambda_1 (1-\lambda_2)}(Y\mathcal{A}_j+\mathcal{Z}_{j,1}C_1)$, $\Psi^2_{14}=\sqrt{\lambda_2 (1-\lambda_1)}(Y\mathcal{A}_j+\mathcal{Z}_{j,2}C_2)$, then condition \eqref{eq:lmi_lam} holds.
\end{proposition}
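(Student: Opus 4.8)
The plan is to use the standard vertex argument for polytopic LMIs: condition \eqref{eq:lmi_lam} is an affine function of the tuple $(A_k, Z_k, Z_{k,1}, Z_{k,2})$ once the matrix $Y$ is held fixed, and the set of positive definite matrices is a convex cone, so feasibility at the generators $\mathcal{A}_j$ of the polytope propagates to every $A_k$ in their convex hull. Concretely, I would fix $Y := \bar{Y}$ using the common matrix $0<\bar{Y}\le I$ supplied by the hypothesis, and for a given $k$ choose the remaining decision variables in \eqref{eq:lmi_lam} as the convex combinations $Z_k := \sum_{j=1}^a h_j(k)\,\mathcal{Z}_j$, $Z_{k,1} := \sum_{j=1}^a h_j(k)\,\mathcal{Z}_{j,1}$, and $Z_{k,2} := \sum_{j=1}^a h_j(k)\,\mathcal{Z}_{j,2}$, where the $h_j(k)$ are the interpolation weights with $h_j(k)\ge 0$, $\sum_{j=1}^a h_j(k)=1$, and $A_k = \sum_{j=1}^a h_j(k)\,\mathcal{A}_j$.

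Then I would verify, block by block, that with this choice $\Psi_k = \sum_{j=1}^a h_j(k)\,\Psi^2_j$. The diagonal blocks equal $Y = \big(\sum_j h_j(k)\big)Y = \sum_j h_j(k)\,Y$ because the weights sum to one; the $(1,2)$ block equals $\sqrt{\lambda_1\lambda_2}\,(YA_k + Z_kC) = \sqrt{\lambda_1\lambda_2}\sum_j h_j(k)\,(Y\mathcal{A}_j + \mathcal{Z}_jC)$, since the scalar coefficient factors out and both $YA_k$ and $Z_kC$ are linear in the convexly combined data; the $(1,3)$ and $(1,4)$ blocks $\Psi_{13},\Psi_{14}$ decompose into $\Psi^2_{13},\Psi^2_{14}$ in exactly the same way, and the off-diagonal zero blocks are trivial. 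Hence $\Psi_k$ is the $h_j(k)$-weighted average of the vertex matrices $\Psi^2_j$.

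Finally, since each $\Psi^2_j>0$ by assumption and the weights are nonnegative and sum to one (so at least one is strictly positive), the combination $\sum_{j=1}^a h_j(k)\,\Psi^2_j$ is positive definite; therefore $\Psi_k>0$. As $k$ was arbitrary, \eqref{eq:lmi_lam} holds for all $k$, which is the assertion of the proposition.

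I do not expect a genuine obstacle in this argument: the only points needing care are confirming the affine block decomposition above, noting that the same $Y=\bar Y$ is used at every vertex (which is built into the hypothesis), and observing that $\lambda_1,\lambda_2$ are fixed constants so that the square-root coefficients never break affinity in the remaining variables. A secondary remark worth recording is that the ``$\forall k$'' appearing in the statement of \eqref{eq:lmi_lam_poly} is vacuous, since the vertex LMIs do not depend on $k$; the content of the proposition is precisely this vertex-to-polytope implication.
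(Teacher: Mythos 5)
Your proof is correct and follows essentially the same route as the paper's: define $Z_k$, $Z_{k,1}$, $Z_{k,2}$ as the $h_j(k)$-weighted convex combinations of the vertex variables, observe that $\Psi_k=\sum_{j=1}^a h_j(k)\,\Psi^2_j$ by linearity of every block in $(A_k,Z_k,Z_{k,1},Z_{k,2})$ with $Y$ fixed, and conclude positive definiteness from that of the vertices. Your block-by-block verification and the remark about the vacuous ``$\forall k$'' are just more explicit versions of what the paper states in one line.
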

\begin{proof}
Let $Z_k=\sum_{j=1}^ah_j(k)\mathcal{Z}_j$, $Z_{k,1}=\sum_{j=1}^ah_j(k)\mathcal{Z}_{k,j}$, and $Z_{k,2}=\sum_{j=1}^ah_j(k)\mathcal{Z}_{j,2}$. Taking into account that $A_k=\sum_{j=1}^ah_j(k)\mathcal{A}_j$, $\Psi_k=\sum_{j=1}^ah_j(k)\Psi^2_j$. Since $h_j(k) \ge 0$ and $\sum_{j=1}^ah_j(k)=1$, $\Psi^2_j>0$, $j=1,\,2,\,\dots,\,a$, that implies~$\Psi_k~>~0$. \hfill$\Box$
\end{proof}

A sufficient condition for Theorem~\ref{th:cov} is formulated as follows:
\begin{proposition}
\label{prop:cov}If there exist $\mathcal{Z}_j$, $\mathcal{Z}_{j,1}$, $\mathcal{Z}_{j,2}$, $j=1,\,2,\,\dots,\,a$, and $0<\bar{Y}\le I$ such that \eqref{eq:lmi_lam_poly} is satisfied, then an upper bound on $\lim_{k \rightarrow \infty}g_{\lambda_1 \lambda_2}(k,V)$ can be found solving
\bea
\label{eq:opt_poly}
&\textrm{argmax}_V \textrm{Trace}(V)\\
&\textrm{subject to } V>0,\,\Gamma^2(V)\ge 0,\, j=1,\,2,\,\dots,\,a
\eea
where
\bea
\label{eq:cov_poly}
&\Gamma^2 (V)=\\
&\bpm
\mathcal{A}_j+\mathcal{A}_j^T+Q-V&\sqrt{\lambda_1 \lambda_2}\mathcal{A}_jVC^T&\Gamma^2_{13} &\Gamma^2_{14}&I \\
(*)&CVC^T+R&0&0&0\\
(*)&(*)&\Gamma^2_{33}&0&0\\
(*)&(*)&(*)&\Gamma^2_{44}&0\\
(*)&(*)&(*)&(*)&V
\epm
\eea
$\Gamma^2_{13}=\sqrt{\lambda_1 (1-\lambda_2)}\mathcal{A}_jVC_1^T$, $\Gamma^2_{14}=\sqrt{\lambda_2 (1-\lambda_1)}\mathcal{A}_jVC_2^T$, $\Gamma^2_{33}=C_1VC_1^T+R_{11}$, $\Gamma^2_{44}=C_2VC_2^T+R_{22}$.
\end{proposition}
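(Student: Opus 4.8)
The plan is to derive Proposition~\ref{prop:cov} from Theorem~\ref{th:cov} by showing that the finitely many vertex constraints $\Gamma^2(V)\ge 0$, $j=1,\dots,a$, of \eqref{eq:opt_poly} are a \emph{sufficient} condition for the (in principle infinitely many) constraints $\Gamma(V)\ge 0$, $\forall k$, underlying \eqref{eq:tracebound}. The arrival-rate part is immediate: by Proposition~\ref{prop:lmipoly}, feasibility of \eqref{eq:lmi_lam_poly} implies feasibility of \eqref{eq:lmi_lam} for all $k$, which is exactly the hypothesis of Theorem~\ref{th:cov}. So the crux is the implication on the trace-optimization constraints, after which every $V$ feasible for \eqref{eq:opt_poly} is feasible for \eqref{eq:tracebound} at every admissible $A_k$, and the bound guaranteed by Theorem~\ref{th:cov} carries over to the maximizer of \eqref{eq:opt_poly}.

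The heart of the argument, and the step I would develop most carefully, is the observation that, whereas $\Gamma(V)$ depends \emph{quadratically} on $A_k$ through its $(1,1)$ block $A_kVA_k^T+Q-V$ (so vertex satisfaction alone would not suffice), the augmented matrix $\Gamma^2(V)$ in \eqref{eq:cov_poly} is \emph{affine} in the matrix put in place of $\mathcal{A}_j$: every block is either independent of it ($CVC^T+R$, $\Gamma^2_{33}$, $\Gamma^2_{44}$, $I$, $V$) or linear in it ($\mathcal{A}_j+\mathcal{A}_j^T+Q-V$, $\sqrt{\lambda_1\lambda_2}\mathcal{A}_jVC^T$, $\Gamma^2_{13}$, $\Gamma^2_{14}$). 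Writing $\Gamma^2(V;A)$ for the matrix obtained by substituting a generic $A$, and $A_k=\sum_{j}h_j(k)\mathcal{A}_j$ with $h_j(k)\ge 0$, $\sum_j h_j(k)=1$, affineness gives $\Gamma^2(V;A_k)=\sum_j h_j(k)\,\Gamma^2(V;\mathcal{A}_j)\ge 0$ whenever the $a$ vertex LMIs hold. The fifth block row/column (off-diagonal $I$, diagonal $V$) is precisely the device that ``lifts'' the quadratic term out: taking the Schur complement of $\Gamma^2(V;A_k)\ge 0$ with respect to the $(5,5)$ block $V>0$ replaces its $(1,1)$ block by $A_k+A_k^T+Q-V-V^{-1}$ and leaves the remaining blocks untouched, and a further Schur complement with respect to the positive definite lower-right part $\mathrm{diag}(CVC^T+R,\Gamma^2_{33},\Gamma^2_{44})$ yields $A_k+A_k^T+Q-V-V^{-1}-S(k,V)\ge 0$, where $S(k,V)$ collects the three correction terms $\lambda_1\lambda_2 A_kVC^T(CVC^T+R)^{-1}CVA_k^T+\dots$ present in $g_{\lambda_1\lambda_2}(k,V)$ in \eqref{eq:gfunc}.

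To close the loop I would use the identity $(A_k-V^{-1})V(A_k-V^{-1})^T=A_kVA_k^T-A_k-A_k^T+V^{-1}\ge 0$, valid for any $A_k$ and any $V>0$, i.e.\ $A_kVA_k^T\ge A_k+A_k^T-V^{-1}$. Adding this inequality to the one just obtained gives $A_kVA_k^T+Q-V-S(k,V)\ge 0$, that is $g_{\lambda_1\lambda_2}(k,V)\ge V$; but, recording the Schur-complement identity that $\Gamma(V)\ge 0$ (whose lower-right part $\mathrm{diag}(CVC^T+R,\Gamma_{33},\Gamma_{44})$ is positive definite) is \emph{equivalent} to $g_{\lambda_1\lambda_2}(k,V)-V\ge 0$, this means $\Gamma(V)\ge 0$ at this $A_k$. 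Since $A_k$ was an arbitrary element of $\mathrm{Co}(\mathcal{A}_j)$, $\Gamma(V)\ge 0$ holds $\forall k$, so by Theorem~\ref{th:cov} the trace-maximal $V$ subject to \eqref{eq:opt_poly} serves as the bound on $\lim_{k\to\infty}g_{\lambda_1\lambda_2}(k,V)$ produced there; it is in general more conservative than the bound of Theorem~\ref{th:cov} precisely because of the relaxation $A_kVA_k^T\ge A_k+A_k^T-V^{-1}$.

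I expect the main obstacle to be recognizing and correctly orienting this relaxation: the term $A_kVA_k^T$ cannot be certified on the polytope vertices because it is quadratic in $A_k$, so it must first be replaced by its affine lower bound $A_k+A_k^T-V^{-1}$, and \emph{lower}-bounding it is the right direction because the constraint $\Gamma(V)\ge 0$ amounts, via Schur complement, to the subsolution inequality $g_{\lambda_1\lambda_2}(k,V)\ge V$. The remainder is bookkeeping -- two nested Schur complements, for which only $V>0$ (already imposed in \eqref{eq:opt_poly}) is needed -- together with the convexity step of Proposition~\ref{prop:lmipoly}. A secondary point worth checking is that feasibility of \eqref{eq:lmi_lam_poly} alone does not obviously make the constraint set of \eqref{eq:opt_poly} nonempty, so solvability of \eqref{eq:opt_poly} should be argued (or assumed) separately, exactly as for \eqref{eq:tracebound} in Theorem~\ref{th:cov}.
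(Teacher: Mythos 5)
Your proof is correct and follows essentially the same route as the paper's: the affine lower bound $A_kVA_k^T\ge A_k+A_k^T-V^{-1}$, the Schur complement that absorbs the resulting $-V^{-1}$ into the fifth block row/column of $\Gamma^2(V)$, and convexity over the polytope vertices. The paper states these steps more tersely and in the opposite (top-down) order, but the argument is identical; your added remarks on the equivalence $\Gamma(V)\ge 0\Leftrightarrow g_{\lambda_1\lambda_2}(k,V)\ge V$ and on feasibility of \eqref{eq:opt_poly} only make explicit what the paper leaves implicit.
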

\begin{proof}
Recall that $\forall k$, $A_k=\sum_{j=1}^ah_j(k)\mathcal{A}_j$. Note that $A_kVA_k^T\ge A_k+A_k^T-V^{-1}$, thus $\bpm
A_kVA_k^T+Q-V&\sqrt{\lambda_1 \lambda_2}A_kVC^T&\Gamma_{13} &\Gamma_{14} \\
(*)&CVC^T+R&0&0\\
(*)&(*)&\Gamma_{33}&0\\
(*)&(*)&(*)&\Gamma_{44}
\epm \ge \bpm
A_k+A_k^T+Q-V-V^{-1}&\sqrt{\lambda_1 \lambda_2}A_kVC^T&\Gamma_{13} &\Gamma_{14} \\
(*)&CVC^T+R&0&0\\
(*)&(*)&\Gamma_{33}&0\\
(*)&(*)&(*)&\Gamma_{44}
\epm$. Applying a Schur complement on the element $A_k+A_k^T+Q-V-V^{-1}$ and taking into account that $A_k=\sum_{j=1}^ah_j(k)\mathcal{A}_j$, $\forall k$, we obtain \eqref{eq:cov_poly}. \hfill$\Box$
\end{proof}

\begin{remark} 
The conditions stated in Propositions~\ref{prop:lmipoly} and \ref{prop:cov} may in some cases be overly conservative as, if satisfied, they will guarantee boundedness and determine an upper bound, respectively, for all the nonlinear systems in the polytope.  
\end{remark}

\begin{remark}
It can easily be seen that the conditions in Propositions \ref{prop:lmipoly} and \ref{prop:cov} reduce to those in Theorems 5 and 6 in~\cite{Liu04} in the case of a constant state matrix, i.e., when $A_k=A$, $\forall k$ or $\mathcal{A}_j=A$, $j=1,\,2,\,\dots,\,a$.  Specifically, condition \eqref{eq:lmi_lam_poly} reduces to 
\bea
\label{eq:lmi_lin}
\Psi^l=\bpm
Y&\sqrt{\lambda_1 \lambda_2}(YA+ZC)&\Psi^l_{13} &\Psi^l_{14} \\
(*)&Y&0&0\\
(*)&(*)&Y&0\\
(*)&(*)&(*)&Y
\epm>0
\eea
with $\Psi^l_{13}=\sqrt{\lambda_1 (1-\lambda_2)}(YA+Z_1C_1)$, $\Psi^l_{14}=\sqrt{\lambda_2 (1-\lambda_1)}(YA+Z_2C_2)$ and the bound on the covariance can be calculated by solving 
\bea
\label{eq:opt_V}
&\textrm{argmax}_{V} \textrm{Trace}(V)\\
&\textrm{subject to } V>0,\,\Gamma^l(V)\ge 0,\, j=1,\,2,\,\dots,\,a
\eea
where
\bea
\label{eq:lin_V}
&\Gamma^l(V)=\\
&\bpm
AVA^T+Q&\sqrt{\lambda_1 \lambda_2}AVC^T&\Gamma^l_{13} &\Gamma^l_{14} \\
(*)&CVC^T+R&0&0\\
(*)&(*)&\Gamma^2_{33}&0\\
(*)&(*)&(*)&\Gamma^l_{44}
\epm
\eea
$\Gamma^l_{13}=\sqrt{\lambda_1 (1-\lambda_2)}AVC_1^T$, $\Gamma^l_{14}=\sqrt{\lambda_2 (1-\lambda_1)}AVC_2^T$, $\Gamma^l_{33}=C_1VC_1^T+R_{11}$, $\Gamma^2_{44}=C_2VC_2^T+R_{22}$.
Furthermore, if only a single intermittent-measurement channel is considered, the conditions become those in Theorems 5 and 6 in \cite{Sinopoli04}.
\end{remark}

\section{Two-Channel Scheduling}
\label{sec:scheduling}

Moving now to the 2C scheduling problem, a straightforward way to solve it is to find a pair of arrival rates $(\lambda_1, \lambda_2)$ that satisfy \eqref{eq:lmi_lam_poly} and therefore ensure boundedness of the error covariance matrix; and then to read each channel $i \in \{1, 2\}$ with period $T_i = \lfloor \frac{1}{\lambda_i} \rfloor$. Since there may be many such pairs, we formulate an optimization problem to select the best one:
\bea
\label{eq:l1l2objective}
\min_{(\lambda_1, \lambda_2) \in L \text{ s.t. \eqref{eq:lmi_lam_poly}}}
                \tau + e^{\frac{1}{1-\lambda_1}} + e^{\frac{1}{1-\lambda_2}},
\eea
where set $L$ contains finitely many candidate pairs $(\lambda_1, \lambda_2)$, and $\tau=\mathrm{Trace}(V^*)$ with $V^*$ being the solution of \eqref{eq:opt_poly} for the pair $(\lambda_1,\,\lambda_2)$. We take finitely many pairs to be able to solve the optimization problem by enumeration. The objective function \eqref{eq:l1l2objective} aims to minimize both the value of the trace (via its upper bound as a proxy) and the arrival rates (channel usage). The exponential formulas in $\lambda_i$ are used to induce a preference for low arrival rates whenever possible, except when the traces would become large.

For 2C scheduling in the linear KF case, we apply a similar procedure, but this time with the conditions of \cite{Liu04}:
\bea
\label{eq:linobjective}
\min_{(\lambda_1, \lambda_2) \in L \text{ s.t. \eqref{eq:lmi_lin}}}
                \tau' + e^{\frac{1}{1-\lambda_1}} + e^{\frac{1}{1-\lambda_2}},
\eea
where $\tau'=\mathrm{Trace}(V')$ with $V'$ being the solution of \eqref{eq:opt_V}.

\begin{algorithm}[tb]
\caption{2C EKF scheduling with iterative KF conditions}\label{alg:iterativekf}
\begin{algorithmic}[1]
\Input candidate set $L$, dynamics $f$, threshold $\delta$ 
\State $k_1 \gets -\infty$, $k_2 \gets -\infty$
\For{each time step $k \geq 0$}
\State differentiate $f$ around $x_k$ to find $A_k$ 
\If{$k=0$, or $\Delta A_k \geq \delta$ and $k_{\mathrm{lin}} \leq \mathrm{max} \{k_1, k_2\}$} 
\State solve \eqref{eq:linobjective} to obtain $(\lambda_1, \lambda_2)$
\State $k_{\mathrm{lin}} \gets k$
\EndIf
\If{$k - k_1 \geq \lfloor\frac{1}{\lambda_1}\rfloor$}
\State measure $y_{1,k}$
\State $k_1 \gets k$
\EndIf
\If{$k - k_2 \geq \lfloor\frac{1}{\lambda_2}\rfloor$}
\State measure $y_{2,k}$
\State $k_2 \gets k$
\EndIf
\State run EKF prediction
\If{any measurement taken}
\State run EKF update by applying \eqref{eq:state_cov_update} for step $k$
\EndIf
\EndFor
\end{algorithmic}
\end{algorithm}

Since the EKF conditions may sometimes be overly conservative, we also propose an empirical alternative in which we recompute the arrival rates by solving \eqref{eq:linobjective}, whenever the linearized dynamics deviate by more than a threshold $\delta$ compared to the steps when the rates were previously computed. The deviation is measured through the 2-norm of the difference between the state transition matrix of the current linearized dynamics and that of the linearized dynamics at the time of the last recomputation: $\Delta A_k := \Vert A_k - A_{k_\mathrm{lin}}\Vert$. The extra condition $k_{\mathrm{lin}} \leq \mathrm{max} \{k_1, k_2\}$ ensures that at least one measurement was taken after the last recomputation. After each pair of rates is chosen, measurements via the two channels are read with periods $\lfloor\frac{1}{\lambda_i}\rfloor$ between recomputations.



\section{Simulation Results}
\label{sec:sim}


All the simulations use either a KF or EKF to estimate the states. The models used are discrete-time with sampling period $T_s = \qty{0.05}{\second}$, and have the form 
\bea
\label{eq:genModel}
\vx_{k+1} &= f(\vx_k) + w_k, \\
\vy_k &= \begin{bmatrix}
    C_1\\C_2
\end{bmatrix}\vx + v_k,
\eea 
with  $w$ and $v$ representing Gaussian noises with covariances \num{1e-4}$I$ and \num{1e-2}$I$ respectively. The simulations are carried out for a duration of \qty{10}{\minute}.

\subsection{2C Estimation and Scheduling for a Linear System}

This scenario numerically validates the method in \cite{Liu04}, as well as our proposed 2C application of that method. The dynamics model linear one-dimensional motion. Following the structure in \eqref{eq:genModel}, $f(\vx)= A\vx$ with $A=\begin{bmatrix}
    1 & 0.05 \\
    0 & 0.995 \\
\end{bmatrix}$, and $
    C_1 = \begin{bmatrix}
        1 & 0
    \end{bmatrix},
    C_2= \begin{bmatrix}
         0 & 1
    \end{bmatrix}$. The states $\vx$ include the one-dimensional position $x$ and velocity $v_x$.

The arrival rates vary on a grid $(\lambda_1, \lambda_2) \in L:=\{0, 0.1, 0.2, 0.3, 0.4, 0.5, 0.6, 0.7, 0.8, 0.9, 1\}^2$. Figure \ref{fig:linearTraceGrid} compares the analytical upper bounds $\tau'$ on the trace with the traces obtained in simulation, for all values of $\lambda_i$ on the grid. The analytical values indeed serve as useful upper bounds for the actual traces: they are always larger, but close in value.
\begin{figure}[!t]
    \centering
    \includegraphics[width=0.45\textwidth]{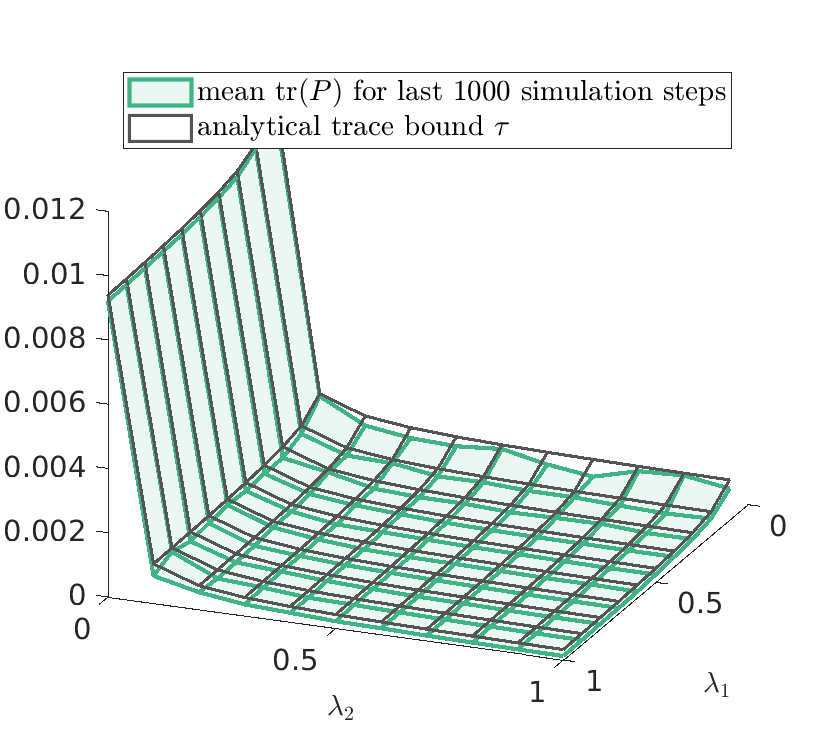}
    \caption{Analytical bound versus actual trace of the error covariance matrix in simulation.}
    \label{fig:linearTraceGrid}
\end{figure}

Next, to solve 2C scheduling, we use \eqref{eq:linobjective} to select a pair $(\lambda_1,\lambda_2)$ from the grid. This pair turns out to be $\lambda_1 = 0.1, \lambda_2 = 0$. Note that because the system is observable from the first channel (position), the method chooses to not use the second channel at all. KF results using scheduling with this pair are shown in Figure \ref{fig:linearSchedulingTraj}. The analytical trace bound is \num{113e-4}, whereas the trace in simulation is \num{101e-4}.
\begin{figure}[!thb]
    \centering
    \includegraphics[width=0.9\linewidth]{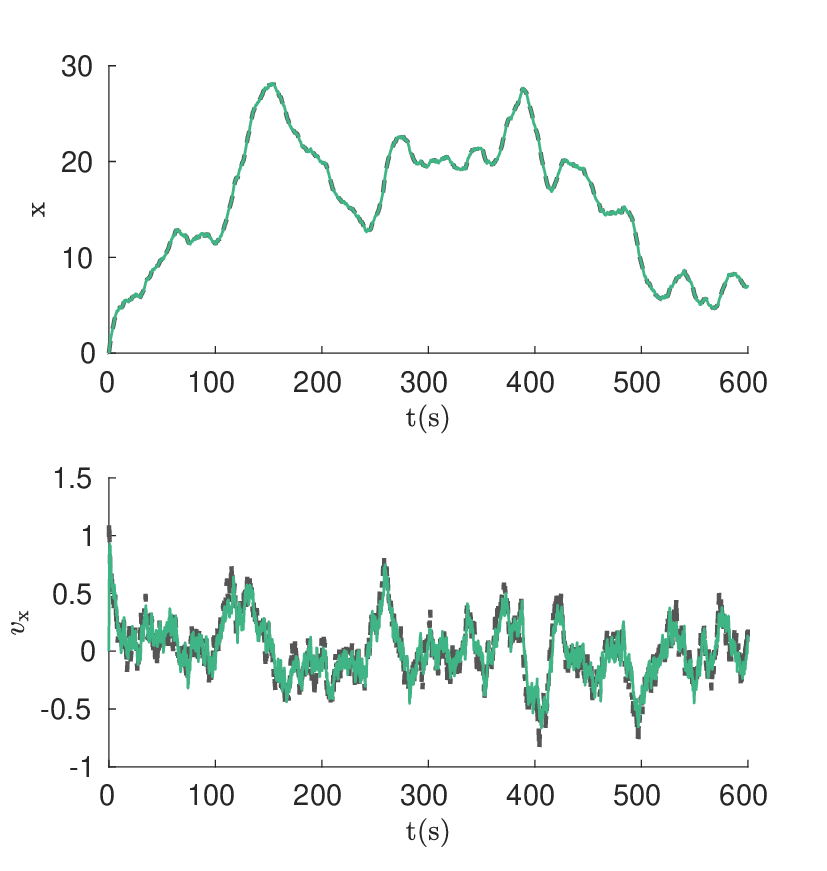}
    \caption{Real versus KF-estimated trajectories for linear model with sensor scheduling. The gray dashed lines represent the states and the teal lines are the  estimates.}
    \label{fig:linearSchedulingTraj}
\end{figure}
\begin{figure}[!thb]
    \centering
    \includegraphics[width=0.43\textwidth]{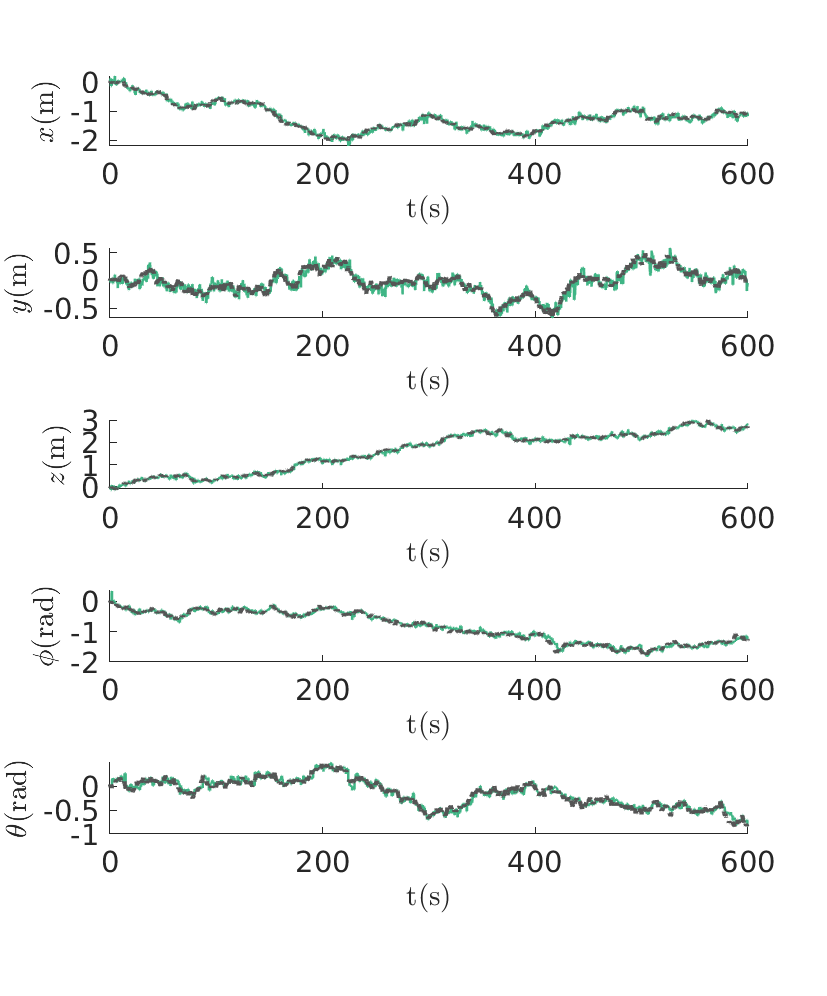}
    \vskip -1em
    \caption{Linear and angular positions and their estimates for non-linear model with sensor scheduling. The gray dashed lines represent the states and the teal lines are the  estimates.}
    \label{fig:nonlinSchedulingTraj}
    \vskip -1em
\end{figure}


\subsection{2C Scheduling for Nonlinear Dynamics}
\label{sub:nonlinsim}
In the nonlinear case, the dynamics used comprise a five degree-of-freedom (5-DOF) constant acceleration kinematic model. The degree of freedom removed from the standard 6-DOF model is the pitch angle. This was done partly to control computational complexity and partly because the real UUV for which we will later use these dynamics lacks this degree of freedom. 
Keeping the structure of \eqref{eq:genModel}, the state vector is $\vx =[x, y, z, \phi, \psi, v_x, v_y, v_z, v_\phi, v_\psi, a_x, a_y, a_z]^T$, and 
\bea
&f(\vx) =\\&\vx + T_s\begin{bmatrix}
     v_x\cos\psi -v_y\sin\psi\cos\phi + v_z\sin\psi\sin\phi\\
    v_x\cos\psi-v_y\sin\psi\cos\phi + v_z\sin\psi\sin\phi\\
    v_y\sin\phi + v_z\cos\phi\\
    v_\phi\\
    v_\psi\cos\phi\\
    a_x\\
    a_y\\
    a_z\\
    0_{5\times1}
\end{bmatrix},\\
&C_1 = \begin{bmatrix}
     0_{7\times3} & 1_{7\times7}
 \end{bmatrix},\, C_2 = \begin{bmatrix}
     1_{3\times3} & 0_{3\times7}\\
 \end{bmatrix}.
 \eea  


The 2C scheduling problem \eqref{eq:l1l2objective} is solved with $(\lambda_1,\lambda_2) \in L:=\{0.001, 0.01, 0.1, 0.5, 0.625\}^2$. The chosen pair is $\lambda_1 = 0.1, \lambda_2 = 0.1$. For this pair, the trajectories of the linear and angular positions and of their respective estimates are shown in Figure \ref{fig:nonlinSchedulingTraj}. The analytical trace bound is \num{0.3874}, whereas the trace in simulation is \num{0.0625}, showing that the conditions are conservative.

This motivates us to perform another simulation, this time using Algorithm \ref{alg:iterativekf} to compute the feedback periods for the two channels. The threshold for the deviation of the dynamics of the system is  $\delta = 0.1$, chosen experimentally. The state trajectories are very similar to Figure \ref{fig:nonlinSchedulingTraj}, so they are not shown here. Instead, Figure \ref{fig:nonlinIterativeFBPeriod} shows the feedback periods chosen along the simulation for both channels, along with the feedback period which corresponds to the single pair $\lambda_1 = 0.1, \lambda_2 = 0.1$ chosen using \eqref{eq:l1l2objective}. It is apparent that Algorithm \ref{alg:iterativekf} produces less conservative periods, albeit without analytical guarantees. The trace for this simulation is \num{0.1316}. 
\begin{figure}[!tb]
    \centering
    \includegraphics[width=0.55\linewidth]{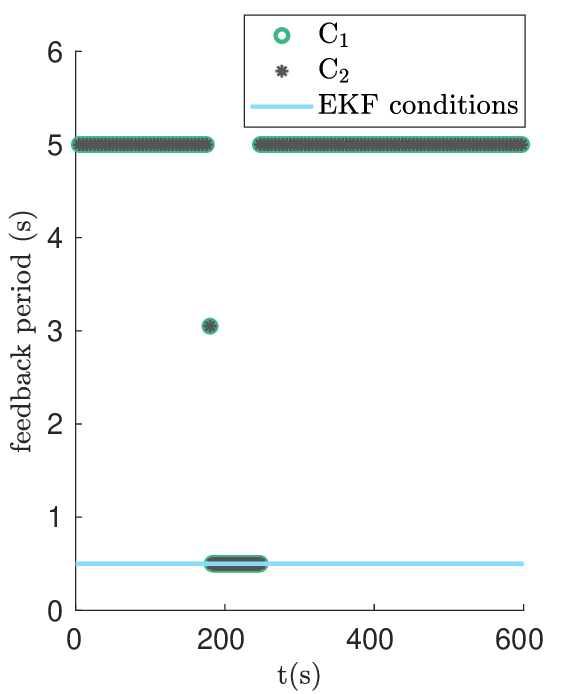}
    \caption{Feedback periods for 2C scheduling with the non-linear model.}
    \label{fig:nonlinIterativeFBPeriod}
\end{figure}

\section{Experimental Results for an Underwater Robot}
\label{sec:exp}

Finally, we apply 2C scheduling with real data collected for UUV pose estimation in the SeaClear project \cite{seaclear}, see Figure \ref{fig:rovImg}. The two channels correspond to the internal and the off-board sensors. Similarly to the non-linear problem of Section \ref{sub:nonlinsim}, the UUV measures its angular position and velocity using an inertial measurement unit, and its linear velocity using a Doppler velocity logger. Differently from the simulated problem, the UUV also has direct access to the measurement of its depth $z$ using a pressure sensor. As a result, the second channel only communicates the position of the UUV in the XY-plane, determined from camera images of a UAV, while the $z$ coordinate is communicated on the first channel.
\begin{figure}[!thb]
        \centering
        \includegraphics[width=0.23\textwidth]{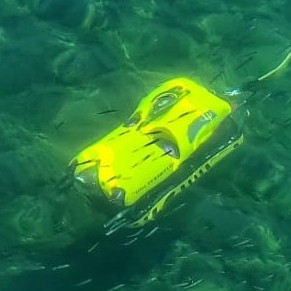}\ 
        \includegraphics[width=0.23\textwidth]{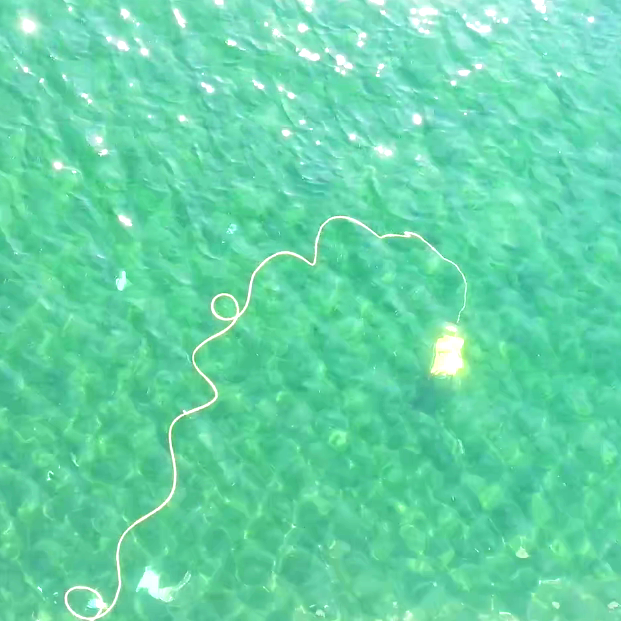}
        \caption{Left:  MiniTortuga, the UUV used for the experiment. Right: Example camera image of the UAV from the collected data set, used to determine the position of the UUV (the image has been cropped for better visibility).}
        \label{fig:rovImg}
\end{figure}

The internal sensors provide feedback at each timestep $(\lambda_1=1)$ and only the loop closure period for the data sent by the UAV varies. The rate $\lambda_2$ is selected from the same set of discrete values as in Section \ref{sub:nonlinsim}, $\lambda_2 \in \{0.001, 0.01, 0.1, 0.5, 0.625\}$. The linearized model used is identical to that from the simulations. The EKF is implemented using the robot\_localization library \cite{Moore2016}.

In this experiment, it turns out that irrespective of whether \eqref{eq:l1l2objective} or the iterative Algorithm \ref{alg:iterativekf} is applied, the resulting value of $\lambda_2$ is always the same: $0.01$. Thus, in this case the EKF solution is less conservative than in the simulations, possibly because along this trajectory the angles only rotate the dynamics, without significantly affecting the stability properties of the system. 

Figure \ref{fig:rovXY} compares the positions in the plane estimated using the EKF with channel scheduling, versus positions read from a short baseline (SBL) acoustic positioning system. The SBL measurement is used as a proxy for the ground truth position of the UUV. It can be seen that the estimated position is close to the one measured by the SBL, with a root mean squared error of $\qty{1.3915}{\metre}$ between the two.
\begin{figure}[!tb]
    \centering 
    \includegraphics[width=0.47\textwidth]{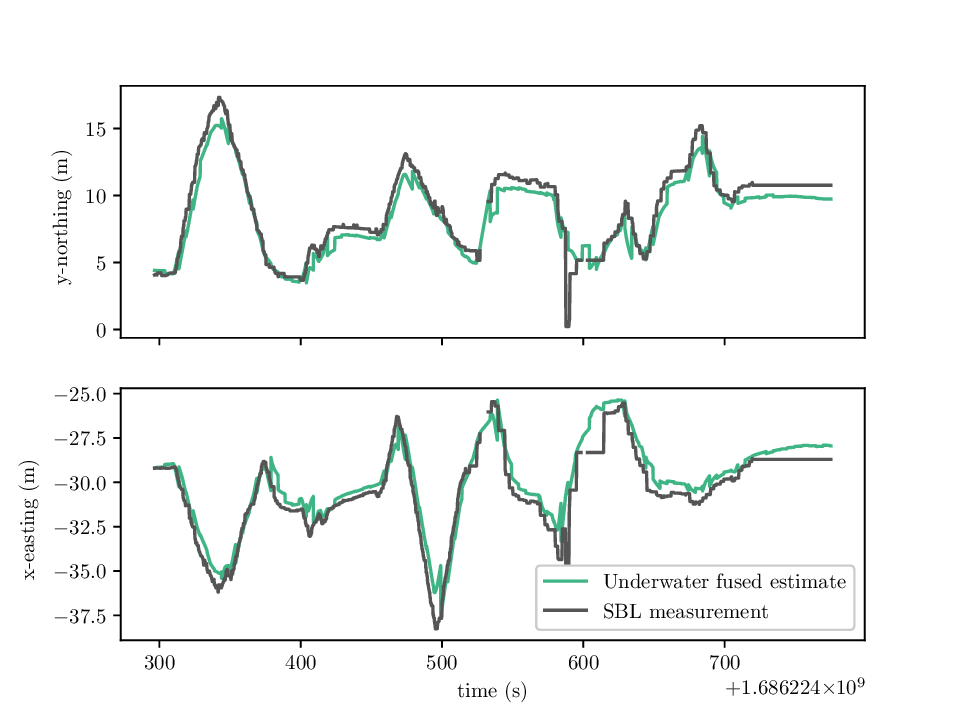}
    \caption{Estimation results using channel scheduling for the real UUV. The horizontal axis displays Unix epoch time.}
    \label{fig:rovXY}
\end{figure}

\section{Conclusions}
\label{sec:conclusions}

This paper characterized the estimation error for an EKF that reads sensors over two channels that drop measurements stochastically, and proposed a solution to deterministically choose when to read the channels when they are under the control of the estimator. The approaches were validated both in simulations and on real data. To generalize the approach in future work, several interesting directions emerge: allowing for an arbitrary number of channels instead of just two, deriving similar results for the unscented Kalman filter, or taking into account specific scheduling protocols.


\bibliographystyle{ieeetr}
\bibliography{references}
\vspace{12pt}
\end{document}